\newtheorem{theorem}{Theorem}
\newtheorem{lemma}[theorem]{Lemma}
\newtheorem{proposition}[theorem]{Proposition}
\newtheorem{claim}[theorem]{Claim}
\numberwithin{theorem}{section}
\newcommand{\abs}[1]{\left\vert#1\right\vert}
\newcommand{\set}[1]{\left\{#1\right\}}
\newcommand{\eps}{\varepsilon}
\newcommand{\CommentS}[1]{}
\newcommand{\pr}{\mathbb{P}}
\title{FPTAS for \#BIS with Degree Bounds on One Side }
\author{ Jingcheng Liu
        \thanks{Computer Science Division, UC Berkeley. Email: {\tt liuexp@berkeley.edu}. Supported in part by NSF grants CCF-1420934 and CCF-1343104.}
        \and
        Pinyan Lu\thanks{Microsoft Research. Email: {\tt pinyanl@microsoft.com}. Part of this work was done while the author was visiting the Simons Institute for the Theory of Computing, Berkeley}
}
\date{}
\begin{document}

\maketitle

\begin{abstract}
Counting the number of independent sets for a bipartite graph (\#BIS) plays a crucial role in the study of approximate counting.
It has been conjectured that there is no fully polynomial-time (randomized) approximation scheme (FPTAS/FPRAS) for \#BIS, and it was proved that the problem for instances with a maximum degree of $6$ is already as hard as the general problem.
In this paper, we obtain a surprising tractability result for a family of  \#BIS instances. We design a very simple deterministic fully polynomial-time approximation scheme  (FPTAS) for \#BIS when the maximum degree for one side is no larger than $5$. There is no restriction for the degrees on the other side, which do not  even have to be bounded by a constant. Previously, FPTAS was only known for instances with a maximum degree of $5$ for both sides.
\end{abstract}

%

\section{Introduction}
Counting the number of independent sets in a bipartite graph (\#BIS) is arguably the most important open question in the study of approximation algorithms for counting problems, which plays a similar role as the unique game for optimization problems, or the PPAD class for fixed points and Nash equilibria. We do not know if it admits a fully polynomial-time (randomized) approximation scheme (FPTAS/FPRAS), and we do not know if it is as hard as counting the satisfying assignments for a satisfaction problem (\#SAT) either. It is conjectured to be of intermediate complexity~\cite{dyer2000relative}. Similar to unique game, the approximability of \#BIS is important not only because it is an interesting problem on its own, but mainly due to the fact that many other counting problems are proved to have the same complexity as \#BIS. It is a complete problem for a family of logically defined problems called $\#{\rm RH}\Pi_1$ as a subfamily of \#P~\cite{dyer2000relative}. With the help of this intermediate class, a number of complete classifications for the approximability for various families of problems have been proved, such as the Boolean \#CSP problems~\cite{dyer2010approximation,DyerGJR12,BulatovDGJM13}.

Without restricting input graphs to be bipartite, the approximability for counting the number of independent sets (\#IS) is well understood. For general graphs, approximately counting the number of independent sets is as hard as finding the maximum independent set, which is NP-hard.
This reduction was one of the very first proofs for inapproximability for counting problems.
The hard instances used in the reduction have very large degrees, and as a result later research has been mainly focused on sparse graphs, such as graphs with a maximum degree bound. An FPRAS based on the Markov chain Monte Carlo (MCMC) method was obtained when the maximum degree is $3$ in~\cite{dyer2000markov} and then $4$ in~\cite{IS_LV97}. Later, a deterministic FPTAS based on the correlation decay technique was obtained for graphs with a maximum degree of $5$ by Weitz~\cite{Weitz06}. On the inapproximability side, it was proved that the problem is NP-hard as long as we allow the maximum degree to be $25$~\cite{IS_DFJ02}. The hardness bound was eventually reduced to $6$ and thus closed the gap in~\cite{Sly10}.

However, the approximability for \#BIS is much more challenging. We do not know any NP-hardness result even if we do not have a degree bound. The previous proof for general graphs does not work because finding a maximum independent set for bipartite graph is equivalent to finding a maximum matching (Konig's theorem), which is polynomial time solvable rather than being NP-hard.
The main reason to make \#BIS extremely important in the study of approximate counting is that  a large number of other problems are proved to have the same complexity as \#BIS (\#BIS-equivalent) or at least as hard as \#BIS  (\#BIS-hard) under approximation-preserving reduction (AP-reduction)~\cite{dyer2000relative}. Examples include combinatorial counting problems such as  \#Downsets (counting the number of downsets of a partial order system), \#Bipartite-$q$-COL, \#Bipartite-MAX-IS (all in~\cite{dyer2000relative}) and \#Stable-Matching~\cite{chebolu2012complexity}, logical problems such as \#1P1NSAT and \#IM~\cite{dyer2000relative}, problems from statistical physics such as
computing the partition problems for ferromagnetic Ising model with mixed external fields~\cite{goldberg2007complexity} and Potts system~\cite{GoldbergJ12}, and many other counting problems. One recent interesting result on \#BIS itself indicates that  \#BIS with maximum degree $6$ is already as hard as general \#BIS~\cite{cai2014bis}. This restricted version of \#BIS is more useful in some reductions and the new result has been used to prove  \#BIS-hardness for other problems such as ferromagnetic two-spin systems with a uniform external field~\cite{LLZ14}.
Moreover, it was shown that if \#BIS does not admit an FPRAS, then there is an infinite approximation hierarchy even within \#BIS~\cite{bordewich2011approximation}.

The main reason to make \#BIS flexible in these reductions is indeed due to its bipartite structure, on which the vertices from two sides can encode (or be encoded by) two different objects for other problems.
For example, a hypergraph can be represented as a bipartite graph (known as its incidence graph), with the left side being the vertex set and the right side being the edge set.
In this new bipartite graph, the degrees on the left side are the same as the
degrees in the hypergraph, while the degrees of the right side are sizes of hyperedges
in the hypergraph. This nature makes it suitable to study \#BIS with different
degree constraints on two sides. For example, \#Semi-regular-BIS studied
in~\cite{GoldbergJ12} has one side regularity requirement.

On the algorithmic side,
it was shown in \cite{IS_DFJ02,inapp_MWW09} that any local MCMC algorithm that uses subsets of vertices as state space, mixes slowly even on a bipartite graph with a maximum degree of $6$.
More recently, an interesting attempted Markov chain by Ge and Stefankovic~\cite{ge2012graph}, which uses subsets of edges as state space and differs from previous MCMC methods, was also shown to mix slowly in~\cite{goldberg2012counterexample}.
Prior to our work, the best known FPRAS or FPTAS for \#BIS was the same as that for \#IS for graphs with a maximum degree of $5$. There was no algorithmic evidence to distinguish \#BIS from \#IS.

\subsection*{Our Results}
Our main result is an FPTAS for \#BIS when the maximum degree for one side is no larger than $5$. There is no restriction for the degrees on the other side, which do not even have to be bounded by a constant. Assuming that there is no FPTAS or FPRAS for general \#BIS, our result is of the best possible in the sense that, if we allow degrees of $6$ on both sides, the problem is already \#BIS-hard. Our FPTAS can also be viewed as the first algorithmic evidence to  distinguish \#BIS from \#IS.

Our algorithm is almost identical to Weitz's algorithm for general \#IS with a maximum degree of $5$, and the main technique is also
correlation decay. We elaborate a bit on the ideas.  Due to a standard argument, computing the number of independent sets is reduced to computing the marginal probability of a vertex to be chosen, if one samples an independent set uniformly at random from all possible independent sets of the input graph. Then, the main idea is to estimate these marginal probabilities directly rather than through sampling,  which is made possible by the remarkable \emph{self-avoiding walk} (SAW) tree introduced by Weitz in~\cite{Weitz06}. For efficiency of computation, the marginal distribution of a vertex is estimated using only a local neighborhood around a vertex. To justify the precision of the estimation, we show that far-away vertices have little influence on the marginal distribution. This is done by analyzing the decay rate of correlation between two vertices in terms of their distance. In~\cite{Weitz06}, it is proved that when the degree of each vertex is at most $5$, this decay rate is exponentially small in the depth of the SAW tree.
However, the same analysis does not apply to our case as the degrees of one side can be arbitrarily large.
To overcome this, our main idea is to combine two recursion steps of the SAW for \#BIS into one, and work with this two-layer recursion instead.
As it turns out, it has the same effect as treating one side of vertices as variables, while the other side of vertices as constraints.
Then we ensure that the degrees in the first layer, which are the variables' degree, are always no more than $5$.
The key is to formalize an observation that the larger the second layer degree (the constraint's degree), the faster the correlation decays.
Such analysis is only possible for \#BIS rather than general \#IS.

Such a two-layer type recursion is similar to that for monotone CNF and hypergraph matching in~\cite{monotone-cnf}. As we have seen there, the analysis for these two-layer recursions is usually much more challenging and complicated. One additional complication here is due to the fact that the degrees for the other side are not even bounded by a constant. For these cases, we need to prove an even stronger notion of correlation decay called \emph{computationally efficient correlation decay} as in~\cite{LLY12,counting-edge-cover,monotone-cnf}, which says that the error decays by a super-constant factor if we go through a vertex with a super-constant degree. In order to prove the correlation decay property, we use a potential function to amortize the decay rate as in many previous works~\cite{RestrepoSTVY11,LLY12,LLY13,SST,liu2014fptas}. A good potential function is the key to these proofs. In this paper, the potential function is carefully constructed to not only make the decay rate less than one but also make the proof simpler.
Effectively, the potential function we use in this paper makes the amortized decay rate of the two-layer recursion act as if it is a single layer. This dramatically simplifies the proof. We believe this simple idea can find applications in the analysis of other two-layer recursions.

\subsection*{Related work}
The correlation decay based FPTAS for counting independent sets was extended to anti-ferromagnetic two-spin systems~\cite{LLY12,SST,LLY13}. From a statistical physics point of view, the independent set problem is a special case of the hard-core model, where one introduces an \emph{activity parameter} and counts weighted independent sets. To extend our result to weighted independent sets and anti-ferromagnetic two-spin systems in general is an interesting open question.

There are some other works that study counting problems for richer families of graphs other than a single maximum degree constraint. A beautiful direction is to replace the maximum degree constraint by the connective constant~\cite{sinclair2013spatial,sinclair2014spatial}, which can be viewed as a version of average degree. However, if one would like to apply this average degree type argument to the \#BIS instances in our setting, the connective constant is unbounded since the degree of one side is unbounded.
Our result also indicates that in the case of bipartite graphs, the average degree may not be powerful enough to capture the complexity of the problem.

Such phenomena where larger degrees (the degrees of constraints) only make the problem easier, also come up in hypergraph independent sets.
In particular, let $d$ be the maximum degree, and $m$ be the minimum edge size (which plays the role of constraint degree).
As is shown in~\cite{bordewich2006},  if $m\ge d+2 \ge 5$, the problem of counting independent sets on such hypergraphs admits FPRAS.
In contrast, if we only have maximum degree parameter $d$, then it only admits FPTAS when $d\le 5$~\cite{monotone-cnf}.

Bipartiteness changing the complexity of a problem is also an interesting phenomenon in the study of approximate counting.
Two other famous examples are graph colorings and perfect matchings.
Counting the number of colorings for bipartite graphs is an important open question, which is known to be \#BIS-hard, but not known to be  \#BIS-equivalent or not.
There is an FPRAS for counting perfect matchings in a bipartite graph~\cite{app_JSV04}, while for general graphs it is a long-standing open question.

\section{Preliminary}
For an undirected graph $G=(V,E)$, a subset of vertices $I\subseteq V$ is an \emph{independent set} of $G$ if there is no edge between any two vertices within $I$. We denote $I(G)$ as the set of independent sets of graph $G$, and $Z(G)\triangleq \abs{I(G)}$.
$G$ is \emph{bipartite} if there exists $U \subseteq V$ such that both $U \in I(G)$ and $U^{C} \in I(G)$. Hence it can be written as $G=(U \uplus U^{C}, E)$.


Given a graph $G=(V,E)$, a vertex $u \in V$, a set of vertices $U \subseteq V$, we define the following:
\begin{itemize}
	\item Removing a vertex $u$ and its incident edges:
		\[ G- u = \left(V\setminus \set{u}, \set{e\in E \mid e \text{ is not incident with } u}\right).\]
	\item Removing a set of vertices $U$ and all incident edges:
		\begin{align*}
			G- U = &\left(V\setminus U,\right.\\
			&\left.\set{e\in E\mid e \text{ is not incident with any } u \in U}\right).
		\end{align*}
\end{itemize}
We write $N_G(u)$ for an open neighborhood of a vertex $u$ (which does not contain $u$), and $N_G[u]$ for a closed neighborhood of $u$ (which includes $u$ itself).
Note that in the case of a bipartite graph $G=(U \uplus V, E)$,
for every vertex $u \in U$, we have $N_G(u) \subseteq V$.

In general, we use $u=0$ to refer the vertex $u$ is not chosen in an independent set, and $u=1$ for being chosen.
With an independent set sampled uniformly at random, the probability that the vertex $u$ is chosen is denoted by $\pr_G(u=1)$. Similarly,  $\pr_G(u=0)$ is for the probability that the vertex $u$ is not chosen.

As an easy observation, the number of independent sets without choosing $u$ is $Z(G-u)$,
with those choosing $u$ being $Z\left(G-N_G[u]\right)$.
Thus $\pr_G(u=0) = \frac{Z(G-u)}{Z(G)}$, and $\pr_G(u=1)= \frac{Z\left(G-N_G[u]\right)}{Z(G)}$.


\section{The Algorithm}
The main result of this paper is the following algorithm.
%

\begin{theorem}\label{thm:bis}
	There is an FPTAS for counting the number of independent sets of a bipartite graph $G=(U \uplus V, E)$ with
$\min\set{\Delta_U,\Delta_V}\leq 5$, where $\Delta_U$ and $\Delta_V$ are the maximum degree over vertex set $U$ and $V$ respectively.
\end{theorem}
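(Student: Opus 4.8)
The plan is to follow the template of Weitz's correlation-decay algorithm for \#IS, but to work with a two-layer version of the self-avoiding walk (SAW) recursion tailored to the bipartite structure. Concretely, by the standard self-reducibility argument recalled in the preliminaries, it suffices to approximate the marginal probability $\pr_G(u=1)$ for a single vertex $u$ to within a multiplicative $(1\pm\eps/\mathrm{poly}(n))$ factor, since the count $Z(G)$ telescopes as a product of such marginals over a vertex elimination order. Fix $u$ on the side whose maximum degree is at most $5$ (WLOG $\Delta_U\le 5$ and $u\in U$). The usual recursion expresses the ratio $R_G(u)=\pr_G(u=0)/\pr_G(u=1)$ in terms of the analogous ratios at the neighbors of $u$ in the residual graph, and then at the neighbors of those neighbors, etc.; iterating this on the Weitz SAW tree $T_{\mathrm{SAW}}(G,u)$ reduces the computation to a tree recursion, which we truncate at depth $\ell=O(\log(n/\eps))$.

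The first real step is to set up the two-layer recursion: one expansion step goes from a vertex $u\in U$ to its neighbors $v_1,\dots,v_d\in V$ (with $d=\deg(u)\le 5$), and a second expansion step goes from each $v_i$ to its neighbors back in $U$. Composing these two steps gives a single recursion whose ``branching'' at the first layer is bounded by $5$, while the second-layer degrees (the $\deg(v_i)$) are unconstrained. I would write this composed recurrence explicitly for the appropriate ratio variable (e.g. in terms of $x_i = $ contributions of the $U$-vertices two levels down), observing — as the introduction promises — that it behaves exactly like a variables/constraints (monotone-CNF-like) recursion where the $U$-vertices are variables and the $V$-vertices are constraints. The key structural observation to formalize is monotonicity: a larger second-layer degree $\deg(v_i)$ only makes the recursion contract faster, so the worst case for the decay rate is when all second-layer degrees equal their minimum.

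The heart of the proof is the correlation-decay analysis via a potential function. I would introduce a smooth increasing map $\Phi$ (a ``potential''), pass to amortized variables $\widetilde{R} = \Phi(R)$, and bound the amortized contraction ratio of the two-layer recursion: the product, over the two layers, of $|\Phi'(\cdot)|$ times the partial derivatives of the recurrence, divided by $\Phi'$ evaluated at the output. The goal is to show this amortized rate is bounded by some constant $\lambda<1$ uniformly over all admissible inputs and all first-layer degrees $d\le 5$ — and, crucially, that when a $V$-vertex has super-constant degree the rate gains an extra super-constant factor (``computationally efficient correlation decay''), which is what lets us handle unbounded $\Delta_V$: going through such a vertex kills enough error that the truncation depth can still be taken logarithmic. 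The cleverness the introduction flags is choosing $\Phi$ so that the two-layer amortized rate ``acts like'' a single layer — i.e. the contribution of the constraint layer is absorbed into a clean factor — which is what makes the case analysis tractable. I expect \emph{this potential-function verification} to be the main obstacle: one must exhibit an explicit $\Phi$, reduce the supremum of the amortized rate to a low-dimensional (ideally one- or two-variable) optimization by symmetrization and the monotonicity-in-$\deg(v_i)$ observation, and then check $\lambda<1$, handling the boundary cases $d=1,\dots,5$ and the large-$\deg(v)$ regime separately.

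Finally, I would assemble the pieces: (i) correlation decay on the SAW tree plus Weitz's theorem that $R_G(u)$ equals the tree ratio gives that truncating at depth $\ell=O(\log(n/\eps))$ incurs only $(1\pm\eps/\mathrm{poly}(n))$ multiplicative error in each marginal — here one must note that although the SAW tree can have large degree, its size up to depth $\ell$ is controlled because every two levels the first-layer branching is $\le 5$, and the extra decay through high-degree $V$-vertices compensates for the larger local branching, so the truncated tree has polynomial size; (ii) computing the truncated tree recursion takes polynomial time; (iii) telescoping over a vertex elimination order and a union bound over the $n$ marginals yields the claimed $(1\pm\eps)$-approximation of $Z(G)$ in time $\mathrm{poly}(n,1/\eps)$. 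This establishes the FPTAS and proves Theorem~\ref{thm:bis}.
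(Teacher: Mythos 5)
Your plan matches the paper's structure step for step: the telescoping reduction to estimating ratios $R(G,u)$, a two-layer recursion obtained by composing two SAW-style self-reducibility steps (so that the $\deg \le 5$ side acts as variables and the unbounded side as constraints), a potential-function amortization that is claimed to collapse the two-layer contraction to a single-layer one, and the notion of computationally efficient correlation decay together with a budget-based truncation depth to handle unbounded second-layer degree while keeping the computation tree of polynomial size. A minor imprecision: your heuristic that ``the worst case is when all second-layer degrees equal their minimum'' is not quite how the paper formalizes things, since the amortization weights $\alpha_i$ grow with $w_i$ and partially offset the faster decay; the paper instead changes variables to $s_i=\bigl(1+\prod_j(1+x_{i,j})^{-1}\bigr)^{-1}$, shows $s_i\ge \frac{17^{w_i}}{16^{w_i}+17^{w_i}}$, splits indices by $w_i<45$ versus $w_i\ge 45$, uses Jensen's inequality to symmetrize the small-$w_i$ group, and proves a monotonicity in $w$ (the function $\gamma(w)$) for the large group.

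The real gap is one you yourself flag as ``the main obstacle'': you never exhibit the potential function. This is where essentially all of the paper's content lives. The paper's choice is $\varphi(x)=\ln\bigl(\ln(1+x)\bigr)$, with $\Phi(x)=\varphi'(x)=\frac{1}{(1+x)\ln(1+x)}$; this particular $\Phi$ makes each inner sum $\sum_j \frac{1/(1+x_{i,j})}{\Phi(x_{i,j})}$ telescope into $\ln\prod_j(1+x_{i,j})$, so the whole two-layer amortized rate becomes a single sum $\frac{h}{(1+h)\ln(1+h)}\sum_i \alpha_i(1-s_i)\ln\frac{s_i}{1-s_i}$ in the $s_i$'s alone. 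The truncation scheme is also concrete: $M$-based depth with $M=45$, subtracting $\lceil\log_{45}(w_i+1)\rceil$ from the budget at each constraint node, which simultaneously keeps the computation tree of size $O(n\cdot 180^{L})$ and matches the extra decay factor won from high-degree $V$-vertices. Without this specific $\varphi$ (or an equivalent one) and the accompanying one-variable calculus verification of $\kappa_d\le 1$ for $d\le 4$, the proposal is a correct roadmap but not yet a proof.
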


Without loss of generality, we assume $\Delta_U \le \Delta_V$. Thus, we have  $\Delta_U \leq 5$. We denote $n = \abs{U}, m= \abs{V}$.



\subsection{Counting from Likelihood Ratios}
We shall first reduce the problem of counting to computing likelihood ratios. This is a standard reduction, and was introduced as the \emph{self-reducibility} structure in~\cite{jerrum1986random}.

For vertices $u \in U$, let $R(G, u) \triangleq \frac{\pr_G(u=1)}{\pr_G(u=0)} = \frac{Z(G-N_G[u])}{Z(G-u)}$.
Although we can similarly define $R(G,v)$ for $v \in V$, our ultimate algorithm would only involve vertices $u \in U$ as variables.

Let $u_1, u_2, \ldots, u_n$ be an arbitrary enumeration of vertices in $U$, and $G_i=G-\set{u_1,\ldots,u_{i-1}}$.
In particular, $G_1=G$, and $G_n - u_n = G - U$. Also recall that the vertex set of $G-U$ is just $V$, which is an independent set of $G$ by itself,
\begin{align*}
	Z(G)= & Z(G-u_1) + Z(G-N_G[u_1]) \nonumber \\
	=&Z(G_1-u_1) \cdot \left(1+R(G_1,u_1)\right) \nonumber \\
	=&\left(Z\left(G_2 - u_2\right) + Z\left(G_2-N_{G_2}\left[u_2\right]\right)\right) \cdot \left(1+R(G_1,u_1)\right) \nonumber \\
	=&Z(G_2 - u_2) \cdot \left(1+R(G_2,u_2)\right)\cdot \left(1+R(G_1,u_1)\right) \nonumber \\
	&\vdots \nonumber \\
	=&Z(G_n - u_n)\cdot \prod_{i=1}^n \left(1+R(G_i,u_i)\right) \nonumber \\
	=&Z(G - U)\cdot \prod_{i=1}^n \left(1+R(G_i,u_i)\right) \nonumber \\
	=&2^m \prod_{i=1}^n \left(1+R(G_i,u_i)\right) \nonumber \\
\end{align*}

\begin{proposition}
	\label{prop:oracle}
	Provided an algorithm $R(G, u, \eps)$ for estimating $R(G,u)$ within an additive error $\eps$, which runs in time $poly(n,1/\eps)$, and outputs $\hat{R}$ such that $\abs{\hat{R} - R(G,u)} \le \eps$.
	There is an FPTAS for estimating $Z(G)$ based on $R(G, u, \eps)$.
\end{proposition}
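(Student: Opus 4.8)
The plan is to run the provided estimator $R(G,u,\eps)$ on a telescoping sequence of induced subgraphs and to combine the outputs through the identity derived just above,
\[
Z(G) \;=\; 2^m \prod_{i=1}^n \bigl(1 + R(G_i, u_i)\bigr),
\]
where $u_1,\dots,u_n$ is any fixed enumeration of $U$ and $G_i = G - \set{u_1,\dots,u_{i-1}}$. The only point requiring care is that an additive error in each $R(G_i,u_i)$ becomes a multiplicative error in the factor $1 + R(G_i,u_i)$, and these multiplicative errors compound over the $n$ factors; so each call must be made with accuracy scaling like $1/n$ times the target accuracy, which is still polynomial.

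In detail, given a target relative error $\eta \in (0,1)$ for $Z(G)$, set $\eps' = \eta/(2n)$. For each $i$, note that $G_i$ is again bipartite and, since deleting vertices only decreases degrees, $G_i$ still has $\Delta_U \le 5$; hence the estimator applies to $(G_i,u_i)$ and returns $\hat R_i$ with $\abs{\hat R_i - R(G_i,u_i)} \le \eps'$. Because $R(G_i,u_i) = \pr_{G_i}(u_i=1)/\pr_{G_i}(u_i=0) \ge 0$, we may replace $\hat R_i$ by $\max\set{\hat R_i,0}$ without increasing the error, so we assume $\hat R_i \ge 0$. The algorithm outputs $\hat Z = 2^m \prod_{i=1}^n (1 + \hat R_i)$.

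For correctness, fix $i$ and write $r = R(G_i,u_i) \ge 0$, so $1 + r \ge 1$ and
\[
\abs{\frac{(1+\hat R_i) - (1+r)}{1+r}} \;\le\; \abs{\hat R_i - r} \;\le\; \eps',
\]
i.e.\ $1 + \hat R_i \in [(1-\eps')(1+r),\,(1+\eps')(1+r)]$. Multiplying over $i$ and dividing by $Z(G) = 2^m\prod_i(1+R(G_i,u_i))$ gives $\hat Z/Z(G) \in [(1-\eps')^n,\,(1+\eps')^n]$. With $\eps' = \eta/(2n)$ and $\eta \le 1$ we get $(1+\eps')^n \le e^{\eta/2} \le 1+\eta$ and, by Bernoulli's inequality, $(1-\eps')^n \ge 1 - n\eps' = 1 - \eta/2 \ge 1 - \eta$; hence $\abs{\hat Z - Z(G)} \le \eta\, Z(G)$. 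For the running time, the algorithm makes $n$ calls to $R(\cdot,\cdot,\eps')$, each running in time $poly(n,1/\eps') = poly(n,1/\eta)$ and returning a number of bit-length $poly(n,1/\eta)$, after which forming the product of these $n$ numbers together with $2^m$ costs time polynomial in $n$, $m$, and $1/\eta$; so the procedure is an FPTAS for $Z(G)$. I do not expect any genuine obstacle here: the argument is the standard self-reducibility reduction, and the only things to check are the two already flagged — that every $G_i$ remains in the graph class on which the estimator is guaranteed to work, and that $\eps'$ is chosen small enough (yet still $1/poly(n,1/\eta)$) to absorb the compounded multiplicative error.
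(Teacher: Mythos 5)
Your proposal is correct and follows essentially the same route as the paper: choose per-call accuracy $\eps' = \eps/(2n)$, use that an additive error $\eps'$ in $R(G_i,u_i)$ yields a multiplicative error at most $\eps'$ in $1+R(G_i,u_i)$ (since $R\ge 0$), and multiply over the $n$ factors. The extra remarks (clamping $\hat R_i$ to be nonnegative, observing $G_i$ stays in the class) are harmless refinements, not a different argument.
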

\begin{proof}
	Let $G_i=G-\set{u_1,\ldots,u_{i-1}}$.
	Given $0<\eps<1$, let $\hat{R_i} \triangleq R\left(G_i, u_i,\frac{\eps}{2n}\right)$ and  $R_i \triangleq R\left(G_i, u_i\right) $. 
	Consider the algorithm that returns $\hat{Z}(G) = 2^m \prod_{i=1}^n \left( 1+\hat{R_i} \right)$ as an approximation for $Z(G)= 2^m \prod_{i=1}^n \left( 1+R_i \right)$.  We have
	\begin{align*}
		&\frac{\abs{\hat{R_i} - R_i}}{1+R_i} \le \abs{\hat{R_i} - R_i} \le \frac{\eps}{2n} \\
		\implies & 		\left( 1 - \frac{\eps}{2n} \right) \le
		\frac{1+\hat{R_i}}{1+R_i} \le
		\left( 1 + \frac{\eps}{2n} \right).
	\end{align*}
	Since $\frac{\hat{Z}(G)}{Z(G)} = \prod_{i=1}^n \frac{1+\hat{R_i}}{1+R_i}$, we have,
	\begin{align*}
		\left( 1 - \frac{\eps}{2n} \right)^n \le&
		\prod_{i=1}^n \frac{1+\hat{R_i}}{1+R_i} =\frac{\hat{Z}(G)}{Z(G)}\le
		\left( 1 + \frac{\eps}{2n} \right)^n \\
		\implies& 1-\eps \le \frac{\hat{Z}(G)}{Z(G)} \le 1+ \eps.
	\end{align*}
This concludes the proof.
\end{proof}

Therefore, the remaining task is to design an algorithm for $R(G, u, \eps)$.

\subsection{Tree Recursion from Self-Reducibility}
Before implementing the algorithm required by Proposition \ref{prop:oracle},
we will show a recursive relation for $R(G,u)$ using the self-reducibility structure again, which gives an alternative derivation of Weitz's self-avoiding walk tree approach~\cite{Weitz06}.


\begin{lemma}
Let $d \triangleq \deg_G(u)$, and $N_G(u)$ be enumerated as $\set{v_i}_{i=1}^d$.
Denote  $G_i \triangleq (G - u) - \set{v_j}_{j=1}^{i-1}$, $w_i \triangleq \deg_{G_i}(v_i) $.
Let $N_{G_i}(v_i)$ be enumerated as $\set{u_{i,j}}_{j=1}^{w_i}$, and $G_{i,j} \triangleq (G_i - v_i) - \set{u_{i,k}}_{k=1}^{j-1}$. Then
	\[
		R(G, u) = \prod_{i=1}^d \left(1+ \prod_{j=1}^{w_i} \left(1+R\left(G_{i,j},u_{i,j}\right)  \right)^{-1}\right)^{-1}.
	\]
	\label{lem:rec}
\end{lemma}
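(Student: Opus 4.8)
The statement is a recursive identity for the likelihood ratio $R(G,u)$ that peels off one vertex $u$ and then the vertices $v_i$ in its neighborhood, so the natural approach is to apply the self-reducibility expansion $Z(H) = Z(H - w) + Z(H - N_H[w])$ twice in succession: once at $u$, and then once at each neighbor $v_i$. I would start from the definition $R(G,u) = \frac{Z(G - N_G[u])}{Z(G - u)}$ and rewrite the numerator and denominator so that both are expressed in terms of the graph $G - u$ (which has vertex set $V \setminus \{u\}$ and from which the $v_i$ will be successively removed).

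**First step.** For the denominator, I would apply the same telescoping computation used just above Proposition~\ref{prop:oracle}, but with $G - u$ in place of $G$ and enumerating only the vertices $v_1,\dots,v_d \in N_G(u)$: writing $G_i = (G-u) - \{v_1,\dots,v_{i-1}\}$ and $w_i = \deg_{G_i}(v_i)$, one gets
\[
  Z(G - u) = Z\bigl((G-u) - \{v_1,\dots,v_d\}\bigr) \cdot \prod_{i=1}^{d}\bigl(1 + R(G_i, v_i)\bigr).
\]
For the numerator, note that $G - N_G[u] = (G - u) - \{v_1,\dots,v_d\}$ exactly, since $N_G[u] = \{u\} \cup \{v_1,\dots,v_d\}$. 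Hence the ratio collapses to
\[
  R(G,u) = \prod_{i=1}^{d}\bigl(1 + R(G_i, v_i)\bigr)^{-1}.
\]

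**Second step.** Now I would expand each inner factor $R(G_i, v_i)$ one more level. Here the key observation is that although $v_i \in V$ (the unbounded side), the same identity $R(H, w) = \prod_{j}(1 + R(H_j, \cdot))^{-1}$ applies verbatim, because the derivation only used $Z(H) = Z(H - w) + Z(H - N_H[w])$ and made no use of which side $w$ lies on. Applying it to $H = G_i$ and $w = v_i$, with $N_{G_i}(v_i) = \{u_{i,1},\dots,u_{i,w_i}\}$ and $G_{i,j} = (G_i - v_i) - \{u_{i,1},\dots,u_{i,j-1}\}$, gives
\[
  R(G_i, v_i) = \prod_{j=1}^{w_i}\bigl(1 + R(G_{i,j}, u_{i,j})\bigr)^{-1},
\]
and substituting this into the previous display yields exactly the claimed two-layer recursion.

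**Where the work is.** There is no deep obstacle here — the whole lemma is really just "apply self-reducibility twice and simplify" — but the one point that needs genuine care is making the single-layer identity $R(H,w) = \prod_i (1 + R(H_i, w_i))^{-1}$ precise and checking it is legitimately reusable. The paper derived the telescoping product only for $Z(G)$ over the vertex set $U$; I would want to state it cleanly as: for any graph $H$ and any vertex $w \in H$ with $N_H(w) = \{x_1,\dots,x_k\}$ and $H_i = (H - w) - \{x_1,\dots,x_{i-1}\}$, one has $R(H,w) = \prod_{i=1}^k (1 + R(H_i, x_i))^{-1}$, proved by the identical telescoping. Two small things to verify along the way: that the recursion is well-founded (each application strictly decreases the number of vertices, and empty products are $1$, handling isolated vertices and $d=0$), and that the enumeration order of neighbors is fixed but arbitrary, so the intermediate graphs $G_i$, $G_{i,j}$ are well-defined. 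Once the reusable single-layer lemma is in hand, the two-layer formula follows by composing it with itself, with the only bookkeeping being the identification $G - N_G[u] = (G-u) - N_G(u)$.
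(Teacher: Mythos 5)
Your proposal is correct and follows essentially the same route as the paper: apply the self-reducibility identity $Z(H)=Z(H-w)+Z(H-N_H[w])$ twice in succession, once to telescope $R(G,u)$ into $\prod_i (1+R(G_i,v_i))^{-1}$ and once to expand each $R(G_i,v_i)$. The only cosmetic difference is that you phrase the first step as a telescoping product for $Z(G-u)$ with the observation $G-N_G[u]=(G-u)-N_G(u)$, whereas the paper telescopes the ratio $\prod_{i=1}^d Z(G_i-v_i)/Z(G_i)$ directly; these are algebraically identical.
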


We refer to $d$ as the first-layer degree, and $w_i$ as the second-layer degrees. If $d=0$ or $w_i=0$ for some $i$, we follow the convention that an empty product is $1$.

The same recursion can be obtained by first constructing the self-avoiding walk tree for $G$ from $u$ and then combining two steps of the tree recursion at a time.
Instead of explicitly constructing the whole SAW tree, we present an alternative derivation based only on a nontrivial partition scheme promised by the self-reducibility.

\begin{proof}
	Recall that $d \triangleq \deg_G(u)$, and $N_G(u)$ is enumerated as $\set{v_i}_{i=1}^d$, and $G_i - v_i = G_{i+1}$,

	\begin{align*}
		R(G,u) =& \frac{\pr_G(u=1)}{\pr_G(u=0)} 
		= \frac{\frac{Z(G-N_G[u])}{Z(G)}}{\frac{Z(G-u)}{Z(G)}}
		= \prod_{i=1}^d \frac{Z(G_i - v_i)}{Z(G_i)}
	\end{align*}
	Next we use the self-reducibility structure of the problem, which gives the following partition scheme for free: $Z(G_i) =Z(G_i-v_i)+Z(G_i - N_{G_i}[v_i]) $, thus
	\begin{align*}
		\frac{Z(G_i - v_i)}{Z(G_i)}
		&=  \frac{Z(G_i - v_i)}{Z(G_i-v_i)+Z(G_i - N_{G_i}[v_i])}\\
		&=  \frac{1}{1+\frac{Z(G_i - N_{G_i}[v_i])}{Z(G_i - v_i)}}\\
		&=  \frac{1}{1 + R(G_i, v_i)}.
	\end{align*}

	Similarly one could use self-reducibility again 
	and show that $R(G_i, v_i) = \prod_{j=1}^{w_i} \left( 1 +R(G_{i,j}, u_{i,j})  \right)^{-1}$. Substituting these $R(G_i,v_i)$ into the above recursion, we conclude the proof.
\end{proof}

It is worth noting that $\pr_{G_i} \left( v_i = 0 \right) = \frac{Z(G_i - v_i)}{Z(G_i)}$.
As an intuition, if $\pr_{G_i} \left( v_i = 0 \right) \approx 1$, namely $Z(G_i - v_i) \approx Z(G_i)$, then one could safely ignore the vertex $v_i$ and still get a good approximation.
We will see a more quantitative version of this fact, and in particular how it relates to the one-sided maximum degree in Claim \ref{prop:si}.

Since $0 < \frac{1}{\prod_{j=1}^{w_i} \left(1+R\left(G_{i,j},u_{i,j}\right) \right)} \le 1$, we can get the  following bound for $R(G,u)$ from the recursion, which will be useful in the analysis:
\begin{lemma}
	\label{rem:rgu}
	\[2^{-\deg_G(u)} \le R(G,u) \le 1.\]
\end{lemma}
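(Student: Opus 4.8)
The plan is to read off both bounds directly from the recursion in Lemma~\ref{lem:rec}, using only the elementary fact that a product of factors each of which lies in $(0,1]$ again lies in $(0,1]$.

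First I would establish the upper bound $R(G,u) \le 1$. In the recursion
\[
	R(G,u) = \prod_{i=1}^d \left(1 + \prod_{j=1}^{w_i}\left(1+R(G_{i,j},u_{i,j})\right)^{-1}\right)^{-1},
\]
each inner product $\prod_{j=1}^{w_i}(1+R(G_{i,j},u_{i,j}))^{-1}$ is a product of positive reals (recall that $Z$-values are positive, so every $R$ is nonnegative, hence every factor $(1+R)^{-1} \in (0,1]$), and therefore is itself a positive real, making each outer factor $\left(1 + (\text{something positive})\right)^{-1}$ strictly less than $1$ and at most $1$. A product of such factors is at most $1$; and if $d=0$ the empty product convention gives $R(G,u)=1$, which still satisfies the bound. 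So $R(G,u)\le 1$ in all cases.

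Next I would establish the lower bound $R(G,u) \ge 2^{-\deg_G(u)}$. Write $d = \deg_G(u)$. Since each $R(G_{i,j},u_{i,j})\ge 0$, we have $1+R(G_{i,j},u_{i,j}) \ge 1$, hence $\prod_{j=1}^{w_i}(1+R(G_{i,j},u_{i,j}))^{-1} \le 1$ (again with the empty product equal to $1$ when $w_i=0$). Therefore $1 + \prod_{j=1}^{w_i}(1+R(G_{i,j},u_{i,j}))^{-1} \le 2$, so each outer factor is at least $1/2$, and the product over $i=1,\dots,d$ is at least $2^{-d}$. This is precisely the claimed lower bound.

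I do not expect a real obstacle here: the statement is a structural consequence of the recursion and the nonnegativity of likelihood ratios, with no potential-function or correlation-decay machinery needed. The only thing to be careful about is the handling of the empty-product corner cases ($d=0$ or some $w_i=0$), which the paper's stated convention already resolves, and the observation that every $R$ appearing in the recursion is nonnegative because it is a ratio of (positive) counts $Z(G-N_G[u])/Z(G-u)$ — this should be noted explicitly, perhaps inductively on the size of the graph, so that the bounds $1+R\ge 1$ and $(1+R)^{-1}\le 1$ are justified at every level of the recursion.
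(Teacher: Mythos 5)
Your proposal is correct and is essentially the paper's own argument: the paper notes that $0 < \prod_{j=1}^{w_i}(1+R(G_{i,j},u_{i,j}))^{-1} \le 1$ and reads both bounds off the recursion, which is exactly what you do (you just spell out the intermediate steps and the empty-product edge cases more explicitly).
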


We can further expand $R(G_{i,j}, u_{i,j})$s by the above recursion and get a tree recursion for $R(G,u)$. Since  $\Delta_U \le 5$,
except for the root of the recursion, we always have $d = \deg_{G_{i,j}}(u_{i,j})  \le \Delta_U - 1 \le 4$ for the first-layer degree.
In these cases, the above bounds are $\frac{1}{16} \le R(G,u) \le 1$.
%

\subsection{Computation Tree}
Now we are ready to implement the algorithm as required by Proposition~\ref{prop:oracle}.
We recursively define $R(G, u, L)$ as follows. 
For base case $L=0$, $R(G,u,L)=2^{-\deg_G(u)}$.
For $L>0$, let $L_i' = \max\left(0, L - \lceil \log_{45} (w_i+1) \rceil \right)$,
then 
\[
	R(G,u,L) = 	\prod_{i=1}^d \left(1+  \prod_{j=1}^{w_i} \left(1+R\left(G_{i,j},u_{i,j}, L_i'  \right)\right)^{-1}  \right)^{-1}.
\]

The recursion depth $L$ is used to control the accuracy of the estimation, and plays the same role as $\eps$ referred in Proposition~\ref{prop:oracle}.
After one step of recursion, $L$ is subtracted by $\lceil \log_{45} (w_i+1) \rceil$ rather than $1$, which is known as $M$-based depth introduced in~\cite{LLY12}, with $M=45$ in our case.

As an intuition, the recursion depth $L$ can be thought of as a computational budget, in which we replace every node with a branching degree greater than $45$ with a $45$-ary branching subtree.
Then, it is clear that the size of this branching computation tree up to depth $L$ is at most $O((45d)^L) = O(180^L)$,
and for second-to-base-case nodes (that is, nodes with $0<L \leq \lceil \log_{45}{(w+1)} \rceil $ ) they involve at most $O(n)$ extra base cases,
so the running time for the algorithm to compute $R(G,u,L)$ is $O(n 180^L)$.

By definition, our estimation $R(G,u,L)$ has the same bounds as $R(G,u)$ in Lemma~\ref{rem:rgu}.
\begin{lemma}
	\label{rem:rgl}
	\[2^{-\deg_G(u)} \le R(G,u,L) \le 1.\]
\end{lemma}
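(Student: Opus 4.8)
The plan is to prove Lemma~\ref{rem:rgl} by induction on the recursion depth $L$, mirroring the structure of the recursive definition of $R(G,u,L)$ and following the same reasoning that yielded the bounds in Lemma~\ref{rem:rgu} for the exact quantity $R(G,u)$.

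\medskip
\noindent\textbf{Base case.} For $L=0$ we have $R(G,u,0) = 2^{-\deg_G(u)}$ by definition. Since $\deg_G(u)\ge 0$, this value lies in $(0,1]$, and in particular it equals its own lower bound $2^{-\deg_G(u)}$ and is $\le 1$. So the claim holds trivially at $L=0$.

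\medskip
\noindent\textbf{Inductive step.} Suppose the bound $2^{-\deg_{G'}(u')}\le R(G',u',L')\le 1$ holds for all graphs $G'$, vertices $u'$, and all depths $L'<L$; we prove it for depth $L>0$. Write $d=\deg_G(u)$ and let $w_i$, $u_{i,j}$, $G_{i,j}$, $L_i'$ be as in the definition, noting $L_i'<L$ since $\lceil\log_{45}(w_i+1)\rceil\ge 1$ whenever $w_i\ge 1$ (and if $w_i=0$ the inner product is empty, equal to $1$). By the induction hypothesis, each $R(G_{i,j},u_{i,j},L_i')\ge 0$, hence $1+R(G_{i,j},u_{i,j},L_i')\ge 1$, so for each $i$
\[
	0 < \prod_{j=1}^{w_i}\left(1+R\left(G_{i,j},u_{i,j},L_i'\right)\right)^{-1} \le 1.
\]
Therefore $1+\prod_{j=1}^{w_i}(1+R(G_{i,j},u_{i,j},L_i'))^{-1}\in(1,2]$, and taking reciprocals,
\[
	\frac{1}{2} \le \left(1+\prod_{j=1}^{w_i}\left(1+R\left(G_{i,j},u_{i,j},L_i'\right)\right)^{-1}\right)^{-1} < 1.
\]
Multiplying these $d$ factors together gives $2^{-d} \le R(G,u,L) < 1 \le 1$, which is exactly the desired bound with $d=\deg_G(u)$. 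This completes the induction.

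\medskip
\noindent I do not expect any genuine obstacle here: the statement is a structural consequence of the recursion's shape (a product of $d$ factors, each a reciprocal of something in $(1,2]$), and the only thing to be careful about is the bookkeeping for empty products when $d=0$ or some $w_i=0$, which the paper's stated convention (empty product equals $1$) handles cleanly — when $d=0$ we get $R(G,u,L)=1$, consistent with $2^{0}\le 1\le 1$. The argument is essentially identical to the one already implicit in the derivation of Lemma~\ref{rem:rgu}, with $R(G_{i,j},u_{i,j})$ replaced by $R(G_{i,j},u_{i,j},L_i')$ and the exact recursion replaced by its truncated version; the key point enabling the copy is that the induction hypothesis supplies precisely the nonnegativity of the child estimates needed to bound each factor.
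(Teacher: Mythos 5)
Your proof is correct and takes essentially the same approach as the paper: the paper merely asserts that $R(G,u,L)$ inherits the bounds of $R(G,u)$ because its recursive definition has the same shape as the exact recursion, and your explicit induction on $L$ (using nonnegativity of the child estimates to bound each outer factor in $[\tfrac{1}{2},1)$) is the natural way to make that assertion precise. The only cosmetic blemish is the displayed chain $2^{-d}\le R(G,u,L)<1\le 1$, which is wrong as written when $d=0$ (where $R(G,u,L)=1$ by the empty-product convention), though you correctly handle that edge case in your closing paragraph.
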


Formally we have the following key lemma, for which the proof is laid out in Section \ref{sec:cd}.
\begin{lemma}[Correlation Decay]
	Let $\alpha = 0.9616$. If $G=(U \uplus V, E)$ is a bipartite graph with $\Delta_U\le 5$, then for any $u \in U$,
	\begin{align}
		\abs{R(G, u, L) - R(G,u)} \le O( \alpha^L).
		\label{eqn:cd}
	\end{align}
	\label{lem:cd}
\end{lemma}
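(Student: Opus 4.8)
The plan is to prove the correlation decay bound \eqref{eqn:cd} by the standard potential-function (amortized decay) method. Let me set up the framework first. Both $R(G,u)$ and $R(G,u,L)$ are computed by iterating the same two-layer recursion map from Lemma~\ref{lem:rec}, the only difference being that the truncated version $R(G,u,L)$ terminates the recursion early — either because the budget $L$ runs out (returning the base value $2^{-\deg}$) or because of the $M$-based decrement $L_i' = \max(0, L - \lceil \log_{45}(w_i+1)\rceil)$. The first step is to pick a monotone differentiable potential function $\Phi$ mapping the interval $[\tfrac{1}{16},1]$ (or $[2^{-5},1]$ at the root) onto some interval, set $\phi = \Phi'$, and define $\tilde R = \Phi(R)$. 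One then studies the recursion in the $\tilde R$ coordinates and bounds the "amortized contraction ratio''
\[
  \kappa \;=\; \max \; \phi\bigl(R(G,u)\bigr)\cdot\sum_{i=1}^d\sum_{j=1}^{w_i}\left|\frac{\partial R(G,u)}{\partial R(G_{i,j},u_{i,j})}\right|\cdot\frac{1}{\phi\bigl(R(G_{i,j},u_{i,j})\bigr)},
\]
where the maximum is over all admissible values $d\le 4$, $w_i\ge 0$, and all $R(G_{i,j},u_{i,j})\in[\tfrac1{16},1]$. The hint in the introduction — "the potential function we use makes the amortized decay rate of the two-layer recursion act as if it is a single layer'' — strongly suggests the right choice lets the two-layer sum telescope into something involving only the first-layer arity $d$, so that $\kappa$ is essentially the single-layer hard-core contraction rate at degree $4$.

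Second, I would carry out the actual bound on $\kappa$. Writing $F_i = \prod_{j=1}^{w_i}(1+R_{i,j})^{-1}$ for the inner product and noting $R(G,u) = \prod_{i=1}^d (1+F_i)^{-1}$, the partial derivatives factor cleanly: $\partial R/\partial R_{i,j}$ is a product of an "outer'' factor depending on $R$ and $F_i$, and an "inner'' factor $F_i/(1+R_{i,j})$. The key structural fact to exploit is the one flagged right after Lemma~\ref{lem:rec} and formalized in the referenced Claim~\ref{prop:si}: a large second-layer degree $w_i$ forces $F_i$ to be small (since each factor $(1+R_{i,j})^{-1}\le (1+\tfrac1{16})^{-1}<1$), which both shrinks the contribution of that branch and is exactly what compensates for the $w_i$ terms being summed over $j$. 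Concretely, $\sum_{j=1}^{w_i}\frac{1}{1+R_{i,j}} \le w_i$ while $F_i \le (17/16)^{-w_i}$ decays geometrically in $w_i$, so the product $F_i\cdot\sum_j(\cdots)$ stays bounded uniformly in $w_i$ — this is the quantitative heart of "larger constraint degree only helps.'' After choosing $\Phi$ to balance these, one verifies numerically/analytically that $\kappa \le \alpha = 0.9616$ for all $d\le 4$; this is the computation I would defer to Section~\ref{sec:cd} rather than grind through here.

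Third, I would handle the $M$-based depth bookkeeping. Because $L$ drops by $\lceil\log_{45}(w_i+1)\rceil$ rather than by $1$ when passing through a degree-$w_i$ second-layer vertex, I need that the extra contraction gained at such a step — roughly a factor $\kappa^{\lceil\log_{45}(w_i+1)\rceil}$ in the error — dominates the number of child branches, which is at most $d\cdot w_i \le 4w_i$, i.e. the branching blow-up. Since $45 > 4/\alpha$ (indeed $45 \gg 4.16$), one gets $4(w_i+1)\cdot\alpha^{\lceil\log_{45}(w_i+1)\rceil} \le 4(w_i+1)^{1-\log_{45}(1/\alpha)}$ which is bounded, so summing the error over all leaves of the truncated computation tree yields a geometric series in $\alpha^L$. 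Combined with the boundedness of $\Phi'$ and $1/\Phi'$ on the relevant interval (to pass between $R$ and $\tilde R$ coordinates with only constant distortion), and the trivial base-case bound $|R(G,u,0)-R(G,u)| \le 1$ from Lemmas~\ref{rem:rgu} and~\ref{rem:rgl}, induction on $L$ gives $|R(G,u,L)-R(G,u)| = O(\alpha^L)$.

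The main obstacle I anticipate is constructing the potential function $\Phi$ so that the amortized two-layer rate genuinely collapses to a one-layer rate: the naive bound on $\kappa$ treating both layers independently would scale badly with $d$ (and not at all with the helpful smallness of $F_i$), so the whole argument hinges on a $\Phi$ for which the inner sum $\phi(R)F_i\sum_j(1+R_{i,j})^{-1}/\phi(R_{i,j})$ simplifies — plausibly one wants $\phi$ of the form making $\phi(R(G_i,v_i))$ cancel against the factors produced by differentiating the inner product, effectively reducing the analysis to bounding $\phi(R)\sum_i |\partial R/\partial R(G_i,v_i)| / \phi(R(G_i,v_i))$, which is precisely the single-layer hard-core recursion of arity $d\le 4$. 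Getting this cancellation exactly right, and then verifying the resulting one-dimensional inequality $\kappa\le 0.9616$ over the parameter ranges, is where the real work lies; everything else is routine bookkeeping.
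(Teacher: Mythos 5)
Your high-level plan --- potential function, amortized contraction rate, and the observation that a large second-layer degree $w_i$ forces $F_i=\prod_j(1+x_{i,j})^{-1}$ to be geometrically small, which compensates for the $w_i$ terms in the inner sum --- matches the paper's approach, and you correctly flag that the crux is a $\Phi$ for which the two-layer rate collapses to a one-layer one. (The paper's choice is $\varphi(x)=\ln\ln(1+x)$, after which the double sum becomes $\sum_i(1-s_i)\ln\frac{s_i}{1-s_i}$ with $s_i=(1+F_i)^{-1}$, a single-layer expression.)

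However, your third paragraph, handling the $M$-based depth bookkeeping, has a concrete gap. You claim that
\[
4(w_i+1)\cdot\alpha^{\lceil\log_{45}(w_i+1)\rceil}\;\le\;4(w_i+1)^{1-\log_{45}(1/\alpha)}
\]
is bounded because $45>4/\alpha$. But with $\alpha=0.9616$ one has $\log_{45}(1/\alpha)\approx 0.01$, so the exponent is about $0.99>0$ and this quantity grows without bound in $w_i$. The depth decrement by $\lceil\log_{45}(w_i+1)\rceil$ only buys a \emph{polynomial} penalty $\alpha_i=\alpha^{-\lceil\log_{45}(w_i+1)\rceil}\approx(w_i+1)^{0.01}$; it cannot on its own cancel a linear-in-$w_i$ branching factor. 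The actual source of compensation is exactly what you noticed in your second paragraph and then dropped here: for $w_i\ge 45$, $s_i\ge\frac{17^{w_i}}{16^{w_i}+17^{w_i}}$ is exponentially close to $1$, so $(1-s_i)\ln\frac{s_i}{1-s_i}\lesssim w_i(16/17)^{w_i}\ln(17/16)$ decays geometrically, overwhelming the polynomial $\alpha_i$. The paper makes this precise by \emph{building the $\alpha_i$ factor into} the amortized rate $\kappa_d=\sum_{i,j}|\partial h/\partial x_{i,j}|\,\frac{\Phi(h)}{\Phi(x_{i,j})}\alpha_i$, proving $\kappa_d\le 1$ for $d\le 4$ (Claim~\ref{prop:kappa}), and showing each $i\in I_2$ contributes at most $\gamma(w_i)<1/5$ via the decreasing function $\gamma(w)=w(16/17)^w\ln(17/16)\,\alpha^{-\log_{45}(w+1)-1}$ for $w\ge 45$. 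Your formulation, which aims for a bare $\kappa\le 0.9616$ without the $\alpha_i$ weight and then tries to pay for the depth decrement by a leaf-count, does not close the induction; you need the weighted rate and the $s_i$-localization to work together, not as separate ``bookkeeping.''

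Separately, you omit that the bound $d\le 4$ holds only for non-root calls; the root can have $d=5$, and the paper proves a weaker bound $\kappa_5<3$ (Claim~\ref{prop:kappa2}) applied once, plus the constant-distortion step between $R$ and $\varphi(R)$ coordinates, to finish the $O(\alpha^L)$ claim.
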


With this lemma, it is easy to estimate $R(G,u)$ by $R(G, u, L)$ with an additive error of $\epsilon$ by choosing $L=O(\log \frac{1}{\epsilon})$. Then combined with  Proposition \ref{prop:oracle}, we get the proof for Theorem \ref{thm:bis}.

\section{Analysis and Correlation Decay}
\label{sec:cd}
In this section, we establish the key correlation decay Lemma~\ref{lem:cd}.
To do that, a natural approach is to use induction, and show that the error decreases by a constant factor along each recursion step.
Unfortunately, this step-wise decay is not true in our case.
Instead, we perform an amortized analysis on the decay rate by a potential function, and show that step-wise decay is recovered on the new domain under the potential function.

In Section~\ref{sec:cd1}, we outline the induction and give a derivation of the \emph{amortized decay rate}.
We show that it suffices to bound the amortized decay rates as in Claim~\ref{prop:kappa} and \ref{prop:kappa2}, which are proved in Section~\ref{sec:cd2}.
In particular, we show how our choice of the potential function simplifies the amortized decay rates and the proof.

\subsection{Amortized Decay Rates}\label{sec:cd1}
We use
$\varphi(x) = \ln\left( \ln (1+x) \right)$
to map the values $R(G,u,L)$ and $R(G,u)$ into a new domain,
and prove the following:
\begin{align}
	\label{eqn:amortized-cd-all}
	\abs{\varphi \circ R(G,u,L) - \varphi \circ R(G,u)} \le 12\alpha^L.
\end{align}
The choice of this potential function will become clear in the next subsection.

\begin{claim}
	The condition (\ref{eqn:amortized-cd-all}) implies (\ref{eqn:cd}).
\end{claim}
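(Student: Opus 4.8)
The goal is to show that the amortized correlation decay bound $\abs{\varphi \circ R(G,u,L) - \varphi \circ R(G,u)} \le 12\alpha^L$ implies the plain bound $\abs{R(G,u,L) - R(G,u)} \le O(\alpha^L)$. The plan is to exploit the fact that $\varphi(x) = \ln(\ln(1+x))$ is a smooth, strictly increasing bijection on the relevant range of $R$-values, so that its inverse is Lipschitz there; then a bound on the distance between $\varphi$-images transfers (up to a constant) to a bound on the distance between the original values.

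First I would pin down the range: by Lemma~\ref{rem:rgu} and Lemma~\ref{rem:rgl}, both $R(G,u)$ and $R(G,u,L)$ lie in the interval $[2^{-\deg_G(u)}, 1]$, and in any case both lie in $(0,1]$. On a closed subinterval $[a,1]$ with $a>0$, the map $\varphi$ is $C^1$ with derivative $\varphi'(x) = \frac{1}{(1+x)\ln(1+x)}$, which is bounded below by a positive constant $c = \min_{x\in[a,1]} \varphi'(x) = \frac{1}{2\ln 2} > 0$ (the derivative is decreasing, so its minimum on $[a,1]$ is at $x=1$). Hence for any two points $x,y$ in this range, the mean value theorem gives $\abs{\varphi(x) - \varphi(y)} \ge c\,\abs{x-y}$, i.e. $\abs{x-y} \le \frac{1}{c}\abs{\varphi(x)-\varphi(y)} = 2\ln 2 \cdot \abs{\varphi(x)-\varphi(y)}$.

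Then I would simply chain the two inequalities: taking $x = R(G,u,L)$ and $y = R(G,u)$,
\[
	\abs{R(G,u,L) - R(G,u)} \le 2\ln 2 \cdot \abs{\varphi\circ R(G,u,L) - \varphi\circ R(G,u)} \le 24\ln 2 \cdot \alpha^L = O(\alpha^L),
\]
which is exactly (\ref{eqn:cd}). One small subtlety to address is the base case $a = 2^{-\deg_G(u)}$: the lower bound $a$ depends on the degree of $u$, but since we only need a uniform constant and $\varphi'$ is monotatonically decreasing, the worst (smallest) value of $\varphi'$ over $(0,1]$ is still attained at $x=1$, so the constant $2\ln 2$ works regardless of the degree. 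There is really no main obstacle here — the only thing to be careful about is confirming that $\varphi'$ is bounded away from zero on the whole range $(0,1]$ (it is, since $\varphi'(1) = \frac{1}{2\ln 2}$ and $\varphi'$ is decreasing), so that the Lipschitz-inverse argument is valid without needing the precise value of the degree-dependent lower endpoint.
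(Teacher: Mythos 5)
Your proof is correct and takes essentially the same approach as the paper: both bound the derivative of $\varphi$ (equivalently, of $\varphi^{-1}$) on the range where $R$ and $\hat{R}$ live, observe that the relevant bound is $2\ln 2$ attained at $x=1$, and apply the mean value theorem to transfer the $12\alpha^L$ bound to $24\ln 2 \cdot \alpha^L = O(\alpha^L)$. Your observation that only the endpoint $x=1$ matters (so the degree-dependent lower bound $2^{-\deg_G(u)}$ is irrelevant for the constant) is a small but valid clarification of what the paper does implicitly.
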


\begin{proof}
	Note that $\varphi(x)$ is an increasing function.
	Let $R=R(G,u), \hat{R} = R(G,u,L)$,
recall the bounds from Lemma \ref{rem:rgu} and \ref{rem:rgl}, we have
\[
	\varphi(\frac{1}{2^{5}}) \le
	\varphi(R), \varphi(\hat{R}) \le
	\varphi(1).
\]

As a result, by Mean Value Theorem, $\exists \bar{y}: \varphi(\frac{1}{32})\le \bar{y} \le \varphi(1)$ such that
\begin{align*}
	\abs{\hat{R} - R} =& \D{\varphi^{-1}(y)}{y}\Big|_{y=\bar{y}} \cdot \abs{\varphi(\hat{R}) - \varphi(R)}\\
	\overset{(\clubsuit)}{\le}& 2\ln(2)\cdot 12\alpha^L=24\ln(2)\cdot \alpha^L.
\end{align*}
where $(\clubsuit)$ follows from the fact 
\[\D{\varphi^{-1}(y)}{y} = \left(1+\varphi^{-1}(y)\right) \ln \left(1+\varphi^{-1}(y)\right) \le 2\ln 2,\]
together with condition (\ref{eqn:amortized-cd-all}). This completes the proof.
\end{proof}

Since the case $d=5$ is applied  only once at the root, 
we first assume that $1\le d \le 4$ and show the following:
\begin{align}
	\label{eqn:amortized-cd}
	\abs{\varphi \circ R(G,u,L) - \varphi \circ R(G,u)} \le 4\alpha^L.
\end{align}
We prove it by induction on $L$.
Let $R=R(G,u), \hat{R} = R(G,u,L)$,
For the base case $L=0$,
we have
\[\abs{\varphi(\hat{R}) - \varphi(R)} \le \varphi(1) - \varphi(\frac{1}{32}) < 4. \]

Supposing the induction hypothesis holds for $L<l$, we prove that it also holds for $L=l$.
If $u$ is an isolated vertex, i.e. $d=0$, $\hat{R} = R = 1$. Now we assume that $d>0$.

Let
\[h(\mathbf{x}) = \prod_{i=1}^d \left(1+ \prod_{j=1}^{w_i} \left(1+x_{i,j}\right)^{-1} \right)^{-1},\]
which is the analytic version of the recursion.
We also write $h = h(\mathbf{x})$ for short.

Let $\mathbf{y}$ be the accurate vector with 
\[y_{i,j} = \varphi \circ R\left(G_{i,j},\ u_{i,j}\right),\]
and $\mathbf{\hat{y}}$ be the estimated vector with 
\[\hat{y}_{i,j} = \varphi \circ R\left(G_{i,j},\ u_{i,j},\ \max(0,L - \lceil \log_M (w_i + 1) \rceil) \right).\]

Define $\mathbf{x} \triangleq \varphi^{-1} ( \mathbf{y} )$ for $x_{i,j} = \varphi^{-1}(y_{i,j})$, which is applying $\varphi^{-1}$ entry-wise to $\mathbf{y}$, similarly for $\mathbf{\hat{x}} \triangleq \varphi^{-1} ( \mathbf{\hat{y}} )$.
Then
\[\varphi \circ R(G,u,L)=\varphi \circ h (\mathbf{x}) \ \ \text{ and }\ \ \varphi \circ R(G,u)=\varphi \circ h (\mathbf{\hat{x}}).\]

Denote
\[\Phi(x)\triangleq \D{\varphi(x)}{x} = \frac{1}{(1+x) \ln (1+x)}.\]

Now by Mean Value Theorem, $\exists \gamma: 0\leq \gamma \leq 1, \mathbf{\tilde{y}} =\gamma \mathbf{y} + (1-\gamma) \mathbf{\hat{y}}$ such that, let $\mathbf{\tilde{x}} \triangleq \varphi^{-1} (\mathbf{\tilde{y}}) $,
\begin{align*}
&\varphi\circ R(G,u,L) - \varphi\circ R(G,u)\\
=& \sum_{i,j} \frac{ \partial (\varphi \circ h \circ \varphi^{-1}) }{\partial y_{i,j}}\Big|_ {\mathbf{y}=\mathbf{\tilde{y}}} \cdot( \hat{y}_{i,j}- y_{i,j})  \\
=&\sum_{i,j}\left( \frac{\partial h}{\partial x_{i,j}}\Big|_{\mathbf{x}=\mathbf{\tilde{x}}}\right) \frac{\Phi(h(\mathbf{\tilde{x}}))}{\Phi(\tilde{x}_{i,j})}  \cdot ( \hat{y}_{i,j}- y_{i,j}).
\end{align*}
By induction hypothesis, we have
\[|\hat{y}_{i,j}- y_{i,j}|\leq 4\alpha^{\max(0,L - \lceil \log_M (w_i + 1) \rceil)} \leq 4\alpha^{L - \lceil \log_M (w_i + 1) \rceil}.\]

Let $\alpha = 0.9616$,
$\alpha_i = \alpha^{- \lceil \log_M (w_i+1) \rceil}$,
by substitution we have
\begin{align*}
|\varphi\circ R(G,u,L) - \varphi\circ R(G,u)|
\leq
4 \alpha^L \cdot \sum_{i,j} \abs{\frac{\partial h}{\partial \tilde{x}_{i,j}}} \frac{\Phi(h(\mathbf{\tilde{x}}))}{\Phi(\tilde{x}_{i,j})} \alpha_i.
\end{align*}
Therefore, the key is to bound the amortized decay rate defined as
\[
	{\kappa}_d(\mathbf{x}) \triangleq \sum_{i,j} \abs{\frac{\partial h}{\partial x_{i,j}}} \frac{\Phi(h)}{\Phi(x_{i,j})} \alpha_i
\]
In particular, the following Claim~\ref{prop:kappa} completes the inductive proof for (\ref{eqn:amortized-cd}).
Then condition (\ref{eqn:amortized-cd}) and Claim~\ref{prop:kappa2} implies condition (\ref{eqn:amortized-cd-all}), and concludes the proof of Lemma~\ref{lem:cd}.

\begin{claim}
	\label{prop:kappa}
	For $d\le 4$, and $\frac{1}{16} \le x_{i,j} \le 1$, ${\kappa}_d(\mathbf{x}) \le 1$.
\end{claim}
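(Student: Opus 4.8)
The plan is to reduce the bound $\kappa_d(\mathbf{x}) \le 1$ to a tractable one-variable (or few-variable) optimization by exploiting the structure of $h$ and the potential $\varphi$. First I would compute the partial derivatives explicitly. Writing $P_i = \prod_{j=1}^{w_i}(1+x_{i,j})^{-1}$ and noting $h = \prod_{i=1}^d (1+P_i)^{-1}$, one gets
\[
\frac{\partial h}{\partial x_{i,j}} = h \cdot \frac{1}{1+P_i} \cdot \frac{P_i}{1+x_{i,j}},
\]
so that $\abs{\partial h/\partial x_{i,j}} \cdot \Phi(h)/\Phi(x_{i,j})$ becomes, after substituting $\Phi(x) = 1/((1+x)\ln(1+x))$, an expression in which the factors $(1+x_{i,j})$ cancel and we are left with $\frac{h}{1+P_i}\cdot\frac{P_i}{\ln(1+h)}\cdot\ln(1+x_{i,j})$. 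Summing over $j$ for fixed $i$, the term $\sum_j \ln(1+x_{i,j})$ appears, which is exactly $-\ln P_i$; hence the inner sum over $j$ collapses to a function of $P_i$ alone, namely $\frac{h}{\ln(1+h)}\cdot\frac{-P_i\ln P_i}{1+P_i}$. This is the crucial simplification the authors advertise: the potential $\varphi$ is engineered so the two-layer recursion behaves like a single layer, with the second-layer degrees $w_i$ entering only through the aggregate $P_i \in (0,1]$.

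Next I would absorb the $\alpha_i$ factor and write $\kappa_d(\mathbf{x}) = \frac{h}{\ln(1+h)}\sum_{i=1}^d \alpha_i\,\frac{-P_i\ln P_i}{1+P_i}$, where $h = \prod_i (1+P_i)^{-1}$. Since $\alpha_i = \alpha^{-\lceil\log_M(w_i+1)\rceil} \le \alpha^{-\log_M(w_i+1)} = (w_i+1)^{-\log_M \alpha}$ and $\alpha < 1$ so $-\log_M\alpha > 0$, a larger second-layer degree $w_i$ forces $\alpha_i$ smaller — this is the "larger constraint degree makes correlation decay faster" phenomenon. I would then bound $P_i$ in terms of $w_i$: since $x_{i,j} \ge 1/16$, we have $P_i \le (17/16)^{-w_i}$, and since $x_{i,j}\le 1$, $P_i \ge 2^{-w_i}$. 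The function $t \mapsto -t\ln t/(1+t)$ is increasing on a suitable range, so for each $i$ the product $\alpha_i \cdot \frac{-P_i\ln P_i}{1+P_i}$ is maximized at a specific $P_i$ determined by $w_i$; one checks that the worst case is $w_i$ small (e.g. $w_i = 1$, giving $P_i$ near its upper value), making $\alpha_i$ close to $1$. After this reduction the claim becomes: for $d \le 4$ and each $P_i$ ranging over $(0,1]$ with the coupling $h = \prod(1+P_i)^{-1}$, show $\frac{h}{\ln(1+h)}\sum_i g(P_i) \le 1$ where $g(t) = -t\ln t/(1+t)$ (taking $\alpha_i = 1$ as the worst case, or with the mild $\alpha_i$ discount).

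The main obstacle I expect is the final multivariate inequality in the $P_i$'s: one must show $\frac{h}{\ln(1+h)}\sum_{i=1}^d g(P_i) \le 1$ subject to $h = \prod_i(1+P_i)^{-1}$. By symmetry and convexity/monotonicity considerations the extremum should occur when all $P_i$ are equal, say $P_i = p$, reducing to the single-variable statement $\frac{(1+p)^{-d}}{\ln(1+(1+p)^{-d})}\cdot d\,g(p) \le 1$ for $d \in \{1,2,3,4\}$ and $p \in (0,1]$; this is then a finite collection of one-dimensional inequalities verifiable by calculus (checking the derivative, or bounding $\ln(1+h) \ge h/(1+h) \ge h/2$ to get a clean sufficient condition). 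Justifying that equal $P_i$ is indeed the worst case — i.e. a Lagrange/symmetrization argument on the constrained maximum — is the delicate part; one route is to show $\partial \kappa/\partial P_i$ has a sign pattern forcing the maximizer to the boundary or diagonal, another is to note that after fixing $h$ the sum $\sum g(P_i)$ is Schur-concave or that the constraint surface is such that spreading the $P_i$ apart only decreases the objective. Once the diagonal reduction is in hand, the numerical value $\alpha = 0.9616$ and the base $M = 45$ are presumably calibrated precisely so that the $d=4$ case (the binding one) clears $1$ with a small margin, and I would finish by plugging in and verifying that margin.
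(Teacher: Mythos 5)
Your opening computation correctly reproduces the paper's central simplification: after the change of variables the double sum collapses to a single sum over $i$, with the second layer entering only through $P_i = \prod_j (1+x_{i,j})^{-1}$ (the paper works with the equivalent $s_i = (1+P_i)^{-1}$, and the summand $(1-s_i)\ln\frac{s_i}{1-s_i}$ equals your $\frac{-P_i\ln P_i}{1+P_i}$). One arithmetic slip: your stated prefactor $\frac{h}{\ln(1+h)}$ is missing a factor of $\frac{1}{1+h}$; the correct prefactor is $\frac{h}{(1+h)\ln(1+h)}$, which is what makes the eventual bound $\le 1$ (via $\ln(1+h)\ge h/(1+h)$) usable.

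The more serious gap is the direction of the $\alpha_i$ dependence. You write that larger $w_i$ forces $\alpha_i$ smaller and that one may take $\alpha_i=1$ as the worst case. This is backwards: since $\alpha<1$, $\alpha_i=\alpha^{-\lceil\log_M(w_i+1)\rceil}\ge \alpha^{-1}>1$, and $\alpha_i$ \emph{grows} without bound as $w_i\to\infty$. The $\alpha_i$ factor is the price paid for the $M$-based depth in the computation tree, not a discount. This kills the strategy of collapsing everything to a single diagonal optimization with $\alpha_i\approx 1$: you would be ignoring exactly the terms that threaten the bound. The actual mechanism is a balance: large $w_i$ makes $\alpha_i$ large, but it also forces $s_i$ into $\left[\frac{17^{w_i}}{16^{w_i}+17^{w_i}},1\right)$, so $(1-s_i)\ln\frac{s_i}{1-s_i}$ is of order $w_i(16/17)^{w_i}$, which decays faster than $\alpha_i$ grows. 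To make this precise, the paper splits the indices into $I_1=\{i:w_i<M\}$ (where $\alpha_i=1/\alpha$ is a constant) and $I_2=\{i:w_i\ge M=45\}$, and shows by a monotonicity argument that each $I_2$ term contributes at most $\frac{1}{5}$; this is where the numerical choices $\alpha=0.9616$ and $M=45$ get calibrated. Without this split, or some equivalent way of controlling the unbounded $\alpha_i$, the claim does not reduce to a finite case check.

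Finally, the symmetrization needs an actual lemma, not Schur-concavity (the constraint $h=\prod(1+P_i)^{-1}$ is multiplicative, not a linear sum, and you also have the coupled prefactor in $h$). The paper's route is to observe that $(1-s)\ln\frac{s}{1-s}$ is concave as a function of $\ln s$ on the relevant range, apply Jensen to replace the $s_i$ for $i\in I_1$ by their geometric mean $\hat s$, and separately use that the prefactor $\frac{h}{(1+h)\ln(1+h)}$ is decreasing in $h$ together with $h\ge 2^{-d_2}\hat s^{d_1}$. This yields a one-variable function $\hat\kappa_4(\hat s)$ that still carries a parameter $d_2\in\{0,1,2,3,4\}$, and the proof is a five-case calculus check, not a single $d\in\{1,\dots,4\}$ check as you describe. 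Your plan is right in spirit about the collapse to single-variable optimization, but without the $I_1/I_2$ split, the sign correction on $\alpha_i$, and the specific concavity-in-$\ln s$ argument, the reduction does not go through.
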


\begin{claim}
	\label{prop:kappa2}
	For $\frac{1}{16} \le x_{i,j} \le 1$,
	${\kappa}_5(\mathbf{x}) < 3$.
\end{claim}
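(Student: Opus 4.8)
The plan is to first rewrite the amortized decay rate $\kappa_d(\mathbf{x})$ in the same compact closed form that drives the proof of Claim~\ref{prop:kappa}, and then exploit the fact that here we only need the loose bound $\kappa_5(\mathbf{x})<3$ (rather than the tight $\kappa_d(\mathbf{x})\le 1$), which lets us finish with elementary estimates.

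First I would derive the closed form. For $1\le i\le d$ write $P_i\triangleq\prod_{j=1}^{w_i}(1+x_{i,j})^{-1}\in(0,1]$, so that $h=h(\mathbf{x})=\prod_{i=1}^d(1+P_i)^{-1}$. A direct differentiation gives $\frac{\partial h}{\partial x_{i,j}}=\frac{h\,P_i}{(1+P_i)(1+x_{i,j})}>0$, and since $\frac{\Phi(h)}{\Phi(x_{i,j})}=\frac{(1+x_{i,j})\ln(1+x_{i,j})}{(1+h)\ln(1+h)}$ and $\sum_{j=1}^{w_i}\ln(1+x_{i,j})=\ln\prod_{j}(1+x_{i,j})=-\ln P_i$, summing over $j$ collapses the second layer:
\[
\sum_{j=1}^{w_i}\abs{\frac{\partial h}{\partial x_{i,j}}}\frac{\Phi(h)}{\Phi(x_{i,j})}=\frac{h}{(1+h)\ln(1+h)}\cdot\frac{-P_i\ln P_i}{1+P_i}.
\]
Hence, writing $g(P)\triangleq\frac{-P\ln P}{1+P}\ge 0$, we get $\kappa_d(\mathbf{x})=\frac{h}{(1+h)\ln(1+h)}\sum_{i=1}^d g(P_i)\,\alpha_i$. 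This is the ``single-layer'' shape that the potential function $\varphi$ was engineered to produce.

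Next I would dispatch the prefactor: since $(1+h)\ln(1+h)-h$ vanishes at $h=0$ and has derivative $\ln(1+h)>0$, we have $(1+h)\ln(1+h)>h$, so $\frac{h}{(1+h)\ln(1+h)}<1$ for all $h>0$. Thus $\kappa_5(\mathbf{x})<\sum_{i=1}^5 g(P_i)\alpha_i$, and it suffices to show $g(P_i)\alpha_i<\tfrac35$ for each $i$; I will actually get $g(P_i)\alpha_i<\tfrac12$. If $w_i=0$ then $P_i=1$ and $g(1)=0$, so assume $w_i\ge 1$. The hypothesis $x_{i,j}\ge\tfrac1{16}$ gives the key bound $P_i\le(17/16)^{-w_i}$, and I bound $g(P)\le -P\ln P$. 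For $1\le w_i\le 44$ we have $\alpha_i=\alpha^{-1}$ and $-P_i\ln P_i\le\max_{0<P\le 1}(-P\ln P)=1/e$, so $g(P_i)\alpha_i\le\frac1{e\alpha}<\tfrac12$. For $w_i\ge 45$ we have $P_i\le(17/16)^{-45}<1/e$, so (as $-P\ln P$ is increasing on $(0,1/e)$) $-P_i\ln P_i\le w_i\ln\tfrac{17}{16}(17/16)^{-w_i}$, while $\alpha_i=\alpha^{-\lceil\log_M(w_i+1)\rceil}\le\alpha^{-1}(w_i+1)^{\log_M(1/\alpha)}$ from $\lceil t\rceil\le t+1$; the resulting product $\alpha^{-1}\ln\tfrac{17}{16}\cdot w_i(w_i+1)^{\log_M(1/\alpha)}(17/16)^{-w_i}$ is decreasing in $w_i$ on $[45,\infty)$ — its logarithmic derivative $\frac1{w_i}+\frac{\log_M(1/\alpha)}{w_i+1}-\ln\tfrac{17}{16}$ is negative there because $\log_M(1/\alpha)$ is tiny ($M=45$ being large) and $\ln\tfrac{17}{16}>\tfrac1{45}$ — hence it is maximized at $w_i=45$, where a direct evaluation gives a value below $\tfrac14$. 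In every case $g(P_i)\alpha_i<\tfrac12$, so $\kappa_5(\mathbf{x})<\sum_{i=1}^5 g(P_i)\alpha_i<\tfrac52<3$.

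The only non-routine point is the unbounded-degree regime $w_i\ge 45$: one must verify that the amortization factor $\alpha_i$, which grows like $(w_i+1)^{\log_M(1/\alpha)}$ and reflects the $M$-based recursion depth, is beaten by the genuine exponential decay $(17/16)^{-w_i}$ coming from $x_{i,j}\ge 1/16$ — this is precisely the ``computationally efficient correlation decay'' phenomenon, and it is what forces $M$ to be large (here $45$). Because the target is the loose $<3$ rather than the sharp $\le 1$ of Claim~\ref{prop:kappa}, all of these estimates can be made crude, avoiding the delicate potential-function tuning needed there.
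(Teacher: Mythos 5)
Your proof is correct, and it takes a genuinely more self-contained route than the paper's. The algebraic collapse to $\kappa_d=\frac{h}{(1+h)\ln(1+h)}\sum_i \alpha_i\,g(P_i)$ is exactly the paper's identity (your $g(P_i)=\frac{-P_i\ln P_i}{1+P_i}$ equals the paper's $(1-s_i)\ln\frac{s_i}{1-s_i}$ under $s_i=(1+P_i)^{-1}$), and the prefactor bound $\frac{h}{(1+h)\ln(1+h)}<1$ is the paper's inequality~(\ref{eqn:log}). The divergence is in how the remaining sum is controlled: the paper's proof of Claim~\ref{prop:kappa2} invokes Claim~\ref{prop:max-equiv}, i.e.\ Jensen's inequality to symmetrize the $I_1$ terms into $\hat{\kappa}_5(\hat{s})$, and then a one-line numerical estimate $\hat{\kappa}_5<5f(\hat{s}^*)/\alpha+1<3$; you instead bound each $\alpha_i g(P_i)$ term-by-term, using $-P\ln P\le 1/e$ for the bounded-degree terms and a monotone tail estimate for $w_i\ge 45$. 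That term-by-term version is strictly more elementary and avoids Claim~\ref{prop:max-equiv} entirely, at the cost of looser constants (which is fine since the target is $<3$, not $\le 1$). Two small remarks: your tail analysis for $w_i\ge 45$ is in spirit the same as the paper's $\gamma(w)$ argument inside the proof of Claim~\ref{prop:max-equiv}, just with cruder bounds; and your closing sentence is slightly misphrased --- what your approach avoids is the Jensen symmetrization and the single-variable optimization of $\hat{\kappa}_d$ over $\hat{s}$, not the ``potential-function tuning'' itself, since the potential $\varphi(x)=\ln\ln(1+x)$ is fixed upstream and used identically in both arguments.
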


\subsection{Choice of Potential Functions}\label{sec:cd2}

In this section, we establish Claim \ref{prop:kappa} and \ref{prop:kappa2} and thus conclude the key lemma and main theorem.
The amortized decay rate ${\kappa}_d(\mathbf{x})$ is a double summation over variables $x_{i,j}$ of two layers.
%
We first show that under our choice of the potential function, where $\varphi(x) = \ln\left( \ln (1+x) \right)$ and thus $\Phi(x)= \D{\varphi(x)}{x} = \frac{1}{(1+x) \ln (1+x)}$, the double summation can be simplified into a single summation after a suitable change of variables.
Let
\[s_i = (1+ \prod_{j=1}^{w_i} (1+x_{i,j})^{-1})^{-1}.\]
Now we have $h=\prod_{i=1}^d s_i$, and ${\kappa}_d(\mathbf{x})$ can be rewritten as
\begin{align*}
	{\kappa}_d =& \sum_{i,j} \abs{\frac{\partial h}{\partial x_{i,j}}} \frac{\Phi(h)}{\Phi(x_{i,j})} \alpha_i\\
	=& \Phi(h) h\cdot \sum_{i=1}^d\frac{\alpha_i\prod_{j=1}^{w_i} \frac{1}{1+x_{i,j}}}{1+ \prod_{j=1}^{w_i} \frac{1}{1+x_{i,j}}} \sum_{j=1}^{w_i} \abs{\frac{\frac{1}{1+x_{i,j}}}{\Phi(x_{i,j})}}\\
	=& \frac{h}{\left( 1+h\right)\ln(1+h)} \cdot \sum_{i=1}^d\frac{\alpha_i\prod_{j=1}^{w_i} \frac{1}{1+x_{i,j}}}{1+ \prod_{j=1}^{w_i} \frac{1}{1+x_{i,j}}} \sum_{j=1}^{w_i} \ln ({1+x_{i,j}})\\
	=& \frac{h}{\left( 1+h\right)\ln(1+h)} \cdot \sum_{i=1}^d\frac{\alpha_i\prod_{j=1}^{w_i} \frac{1}{1+x_{i,j}}}{1+ \prod_{j=1}^{w_i} \frac{1}{1+x_{i,j}}} \ln \left(\prod_{j=1}^{w_i}  ({1+x_{i,j}})\right)\\
	=& \frac{h}{\left( 1+h\right)\ln(1+h)} \cdot \sum_{i=1}^d \alpha_i (1-s_i) \ln\frac{s_i}{1-s_i}.
\end{align*}
Therefore, this specific potential function collapses the two-layer decay rate into a single layer one, which  only depends on $\set{s_i}$.
 As a remark, $s_i$ has a combinatorial meaning back in the original tree recursion, which corresponds exactly to an estimate of $\pr_{G_i} \left( v_i = 0 \right)$. In the following, we treat ${\kappa}_d$ as a function of $\set{s_i}$ rather than $\set{x_{i,j}}$, which significantly simplifies the proof.

Next we convert the bounds for $x_{i,j}$ to bounds for $s_i$.
Note that lowerbounding $s_i$ here is essentially by giving a lowerbound of $\pr_{G_i} \left( v_i = 0 \right)$.
%


\begin{claim}
	\label{prop:si}
	\[\frac{17^{w_i}}{16^{w_i}+17^{w_i}} \leq s_i < 1.\]
\end{claim}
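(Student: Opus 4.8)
The plan is to obtain the bound on $s_i = \bigl(1 + \prod_{j=1}^{w_i}(1+x_{i,j})^{-1}\bigr)^{-1}$ directly from the bounds $\frac{1}{16} \le x_{i,j} \le 1$ established earlier (which hold because, away from the root, the first-layer degree of each $u_{i,j}$ is at most $\Delta_U - 1 \le 4$, so Lemma~\ref{rem:rgu}/\ref{rem:rgl} give $2^{-4} = \frac1{16} \le x_{i,j} \le 1$). Writing $P = \prod_{j=1}^{w_i}(1+x_{i,j})^{-1}$, we have $s_i = \frac{1}{1+P}$, which is a strictly decreasing function of $P$. So the claimed two-sided bound on $s_i$ is equivalent to a two-sided bound on $P$: the upper bound $s_i < 1$ corresponds to $P > 0$, and the lower bound $s_i \ge \frac{17^{w_i}}{16^{w_i}+17^{w_i}}$ corresponds to $P \le (16/17)^{w_i}$.

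First I would handle the upper bound $s_i < 1$: since each factor $(1+x_{i,j})^{-1}$ is strictly positive, $P > 0$, hence $s_i = \frac{1}{1+P} < 1$. (If $w_i = 0$ the empty product gives $P = 1$ and $s_i = \frac12 < 1$, consistent with the convention.) Next, for the lower bound, I would use $x_{i,j} \ge \frac{1}{16}$, so $1 + x_{i,j} \ge \frac{17}{16}$, hence $(1+x_{i,j})^{-1} \le \frac{16}{17}$ for each $j$, and therefore $P = \prod_{j=1}^{w_i}(1+x_{i,j})^{-1} \le \bigl(\frac{16}{17}\bigr)^{w_i}$. Substituting into $s_i = \frac{1}{1+P}$ and using monotonicity,
\[
 s_i = \frac{1}{1+P} \ge \frac{1}{1 + (16/17)^{w_i}} = \frac{17^{w_i}}{17^{w_i} + 16^{w_i}},
\]
which is exactly the claimed lower bound. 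The upper bound $x_{i,j} \le 1$ is not even needed here; it is only used elsewhere in the $\kappa_d$ analysis.

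There is essentially no obstacle: the whole claim is a one-line consequence of the per-variable bound $x_{i,j} \ge \frac{1}{16}$ together with the monotonicity of $t \mapsto \frac{1}{1+t}$. The only point requiring a word of care is the justification that $x_{i,j} \ge \frac{1}{16}$ holds for all the relevant $u_{i,j}$ (i.e.\ that these are genuinely non-root vertices with $\deg \le 4$), which was already set up in the paragraph following Lemma~\ref{rem:rgu}; I would simply cite that, and also note the base-case and empty-product conventions so the statement is literally correct when some $w_i = 0$ or when the recursion has bottomed out. I expect the write-up to be three or four lines.
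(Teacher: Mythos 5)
Your proof is correct and takes essentially the same approach as the paper: bound each factor by $(1+x_{i,j})^{-1} \le 16/17$ using $x_{i,j} \ge 1/16$, and use the monotonicity of $t \mapsto (1+t)^{-1}$ to convert this into the stated lower bound on $s_i$, with $s_i < 1$ being immediate from positivity of the product. The paper's proof is identical in substance, just written more tersely.
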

As an intuition, this claim says that larger $w_i$ will only make $\pr_{G_i} \left( v_i = 0 \right)$ closer to 1. 
In other words, if we treat $v_i$ as a constraint, larger $w_i$ will only weaken the overall ``influence'' of the constraint by making it almost always satisfied, which allows the correlation to decay faster.

\begin{proof}
Since $s_i = (1+ \prod_{j=1}^{w_i} (1+x_{i,j})^{-1})^{-1}$, it is clear that $s_i<1$. As $x_{i,j}\geq \frac{1}{16}$, we have
\begin{align*}
s_i = (1+ \prod_{j=1}^{w_i} (1+x_{i,j})^{-1})^{-1} 
\geq (1+ \prod_{j=1}^{w_i} (1+\frac{1}{16})^{-1})^{-1} 
=\frac{17^{w_i}}{16^{w_i}+17^{w_i}}.
\end{align*}
%
\end{proof}

The rate ${\kappa}_d$ also involves parameters  $\alpha_i = \alpha^{- \lceil \log_M (w_i+1) \rceil}$, which are discontinuous functions in degrees $w_i$. To handle this,
we group the variables into the following two parts: 
\[
	I_1 = \set{i: w_i < M}, \ \ \text{ and } \ \ I_2 = \set{i : w_i \ge M}.
\]

	For $i\in I_1$, we have $\alpha_i=\frac{1}{\alpha}$ being a constant; for $i\in I_2$, we bound them as $\alpha_i = \alpha^{- \lceil \log_M (w_i+1) \rceil} \leq \alpha^{- \log_M (w_i+1) -1} $.
Let $d_1 = \abs{I_1}$ and  $d_2 = \abs{I_2}$, clearly we have $d_1 + d_2 = d$.

The summation in ${\kappa}_d$ is also divided into two part for $I_1$ and $I_2$.
For $i\in I_2$,
$s_i$ lies in a rather narrow range $\left[\frac{17^{w_i}}{16^{w_i}+17^{w_i}},1\right]$ with $\frac{17^{w_i}}{16^{w_i}+17^{w_i}}\geq \frac{17^{45}}{16^{45}+17^{45}} >\frac{9}{10}$.
As we will see, ${\kappa}_d$ is a decreasing function in $s_i$ for $i\in I_2$ in these ranges.
As a result, those terms corresponding to $I_2$ can be replaced by an upper bound of $\frac{1}{5}$.

For  $i\in I_1$, we use Jensen's inequality to prove that the maximum is achieved when $s_i$s are all equal to the
 same value $\hat{s}$.
 Finally, we can bound the decay rate by a function in a single variable  $\hat{s}$.
 Here is the formal definition and the proof. We define the symmetrized version of ${\kappa}_d$ as
%
%
%
%
\[ \hat{\kappa}_d(\hat{s}) = \frac{ \hat{s}^{d_1} \cdot  d_1 \cdot (1-\hat{s})\ln\left(\frac{\hat{s}}{1-\hat{s}}\right)}{\alpha \left( 2^{d_2} + \hat{s}^{d_1}\right)\ln(1+2^{-d_2} \cdot \hat{s}^{d_1})} +   \frac{d_2}{5},\]

\begin{claim}
	\label{prop:max-equiv}
	\[\max_{\mathbf{s}}{\kappa}_d \le \max_{\hat{s}} \hat{\kappa}_d .\]
\end{claim}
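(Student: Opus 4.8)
The plan is to prove the inequality $\max_{\mathbf{s}} \kappa_d(\mathbf{s}) \le \max_{\hat{s}} \hat{\kappa}_d(\hat{s})$ by a two-step reduction, treating the $I_1$ and $I_2$ coordinates separately as indicated in the surrounding text. Recall that after the change of variables, $\kappa_d$ is written purely as a function of $\mathbf{s} = (s_i)_{i=1}^d$, namely
\begin{align*}
	\kappa_d(\mathbf{s}) = \frac{h}{(1+h)\ln(1+h)} \cdot \sum_{i=1}^d \alpha_i (1-s_i) \ln\frac{s_i}{1-s_i}, \qquad h = \prod_{i=1}^d s_i,
\end{align*}
with each $s_i$ ranging over $\left[\frac{17^{w_i}}{16^{w_i}+17^{w_i}}, 1\right)$, and $\alpha_i = \frac{1}{\alpha}$ for $i \in I_1$ while $\alpha_i \le \alpha^{-\log_M(w_i+1) - 1}$ for $i \in I_2$.

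\medskip

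\noindent\textbf{Step 1: Eliminate the $I_2$ coordinates.} First I would fix all $s_i$ with $i \in I_1$ and show that, for each $i \in I_2$, the contribution of that coordinate to $\kappa_d$ is at most $\frac15$, so that the whole $I_2$-block contributes at most $\frac{d_2}{5}$. Concretely, one isolates the $i$-th summand $\alpha_i (1-s_i)\ln\frac{s_i}{1-s_i}$ together with the prefactor $\frac{h}{(1+h)\ln(1+h)}$, bounds $\frac{h}{(1+h)\ln(1+h)} \le 1$ (true since $x/((1+x)\ln(1+x)) \le 1$ for $x > 0$, which is exactly $\Phi(x)\cdot x \le 1$), uses the bound $\alpha_i \le \alpha^{-\log_M(w_i+1)-1}$, and observes that $(1-s_i)\ln\frac{s_i}{1-s_i}$ is decreasing in $s_i$ on the relevant range $s_i \ge \frac{17^{w_i}}{16^{w_i}+17^{w_i}} > \frac{9}{10}$. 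Plugging $s_i = \frac{17^{w_i}}{16^{w_i}+17^{w_i}}$ gives a quantity in $w_i$ alone; one checks it is at most $\frac15$ for all $w_i \ge M = 45$ (this is a monotonicity-in-$w_i$ calculation: the logarithmic growth of $\alpha^{-\log_M(w_i+1)}$ is beaten by the exponential decay of $(1-s_i) = \frac{16^{w_i}}{16^{w_i}+17^{w_i}}$). After this step, replacing each $I_2$ term by $\frac15$ and the remaining prefactor structure, we are reduced to maximizing over $(s_i)_{i \in I_1}$ only, with $h$ effectively replaced by $2^{-d_2}\prod_{i \in I_1} s_i$ after bounding $\prod_{i \in I_2} s_i$ from below appropriately (since $\prod_{i\in I_2} s_i \ge (\tfrac{9}{10})^{d_2}$, but more carefully one wants the denominator $2^{d_2}$, which comes from $s_i \ge \tfrac12$; I will need to double-check the exact constant used in $\hat\kappa_d$).

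\medskip

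\noindent\textbf{Step 2: Symmetrize the $I_1$ coordinates via Jensen.} With $\alpha_i = 1/\alpha$ constant on $I_1$, the $I_1$-part of $\kappa_d$ has the form $\frac{1}{\alpha} \cdot \frac{h}{(1+h)\ln(1+h)} \sum_{i \in I_1}(1-s_i)\ln\frac{s_i}{1-s_i}$ where now $h = c \prod_{i \in I_1} s_i$ for the constant $c = 2^{-d_2}$. I would show this is Schur-concave, or more directly that substituting each $s_i$ by the common value $\hat s = \big(\prod_{i \in I_1} s_i\big)^{1/d_1}$ can only increase it: the product $h$ is unchanged under this substitution, so the prefactor $\frac{h}{(1+h)\ln(1+h)}$ is fixed, and it remains to argue $\sum_{i \in I_1} g(s_i) \le d_1 \, g(\hat s)$ where $g(s) = (1-s)\ln\frac{s}{1-s}$ — which follows from Jensen's inequality once one verifies $g$ is concave on the feasible range $[\tfrac1{16}, 1)$ (or at least on $[\tfrac{17}{33},1)$, the range forced by $w_i \ge 1$), applied to $\ln s_i$ with the convexity/concavity bookkeeping done correctly, exactly as the text says ("we use Jensen's inequality to prove that the maximum is achieved when $s_i$s are all equal"). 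Collecting the two steps yields $\kappa_d(\mathbf{s}) \le \hat\kappa_d(\hat s)$ for the appropriate $\hat s \in [\tfrac1{16},1)$, hence the claim.

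\medskip

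\noindent\textbf{Main obstacle.} I expect the delicate point to be Step 2's Jensen argument: it is not literally $g$ that must be concave in $s$, but the composition must be handled so that fixing the geometric mean (equivalently the arithmetic mean of $\ln s_i$) is the right symmetrization — one must check that $s \mapsto g(e^t)$, i.e. $t \mapsto (1-e^t)\ln\frac{e^t}{1-e^t}$, is concave in $t$ on the relevant interval, and simultaneously that the prefactor genuinely only depends on the product. There is also a subtlety in Step 1 about whether bounding $\prod_{i\in I_2}s_i \ge 2^{-d_2}$ (from $s_i \ge \tfrac12$) is compatible with having already used the sharper lower bound $s_i > \tfrac{9}{10}$ to kill the $\ln\frac{s_i}{1-s_i}$ factor — these two uses must be made in the correct order (first extract the $\frac15$ bound coordinate-by-coordinate, only afterward lower-bound the leftover product in the prefactor), and one should be careful that $\hat\kappa_d$ as written uses $2^{d_2}$ in the denominator, meaning the "+h" and the argument of $\ln(1+\cdot)$ both carry the factor $\hat s^{d_1}$ against a background of $2^{d_2}$, which is consistent with $h = 2^{-d_2}\hat s^{d_1}$ after clearing denominators. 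Once the monotonicity and concavity facts are pinned down, the rest is routine.
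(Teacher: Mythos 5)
Your proposal follows the paper's proof essentially step for step: bound the $I_2$ contribution termwise by $\tfrac15$ using $h/((1+h)\ln(1+h)) \le 1$ together with the monotone-in-$w_i$ estimate $\gamma(w_i)\le\gamma(45)<\tfrac15$, lower-bound the leftover $I_2$ product in the prefactor by $2^{-d_2}$ via $s_i \ge \tfrac12$ (the prefactor being decreasing in $h$), and symmetrize the $I_1$ coordinates by Jensen applied to $t \mapsto (1-e^t)\ln\tfrac{e^t}{1-e^t}$ — exactly the concavity the paper checks on $[-\ln 2,0]$. The one detail to pin down is that the Jensen range is $s_i\in[\tfrac12,1)$ rather than $[\tfrac1{16},1)$ or $[\tfrac{17}{33},1)$ (since $w_i\ge 0$ is allowed, giving $s_i=\tfrac12$), which is precisely the interval on which the paper establishes $f''\le 0$.
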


\begin{proof}
	We begin with some elementary inequalities.
	Let $f(x) = (1-e^x) (x - \ln(1-e^x))$,
	\[
f''(x) = -\frac{e^x \left(1+\left(1-e^x\right) \ln \left(\frac{e^x}{1-e^x}\right)\right)}{1-e^x}.
\]
Since $\frac{1}{2} \le e^x \le 1$ for $x\in [-\ln 2, 0]$, we have $f''(x) \le 0$ and $f(x)$ is concave over $x\in [-\ln 2, 0]$.
 Let $\hat{s}=\left(\prod_{i\in I_1} s_i \right)^{1/d_1}$, by Jensen's inquality, we have
\begin{align}
	\label{eqn:jensen}
	\sum_{i \in I_1} (1-s_i) \ln\frac{s_i}{1-s_i} =& \sum_{i \in I_1} f(\ln s_i) 
	\le \sum_{i \in I_1} f(\ln \hat{s}) 
	= d_1 \cdot (1-\hat{s}) \ln \frac{\hat{s}}{1-\hat{s}}.
\end{align}

Let $g(h) = \ln(1+h) - \frac{h}{1+h}$,
since $g'(h) = \frac{h}{(1+h)^2} \ge 0$ for $0\le h \le 1$,
we have $g(h) \ge g(0) = 0$, namely
\begin{align}
	\frac{h}{(1+h)\ln(1+h)} \le 1,\ \ & \text{ for $0\le h \le 1$}
	\label{eqn:log}
\end{align}

Also note that $\frac{h}{\left( 1+h\right)\ln(1+h)}$ is decreasing in $h$, and $2^{-d_2}\cdot \hat{s}^{d_1}\le h$ due to  $s_i\geq \frac{1}{2}$, we have

\begin{align*}
	{\kappa}_d \le& \frac{h}{\left( 1+h\right)\ln(1+h)}
	\left( \sum_{i\in I_1} \alpha_i (1-s_i) \ln\frac{s_i}{1-s_i}
 +\sum_{i\in I_2} \alpha_i (1-s_i) \ln\frac{s_i}{1-s_i}\right) \\
\le &\frac{2^{-d_2} \cdot \hat{s}^{d_1} \cdot }{\left( 1+2^{-d_2}\cdot \hat{s}^{d_1}\right)\ln(1+2^{-d_2}\cdot \hat{s}^{d_1})}
\sum_{i\in I_1}  \alpha^{-1}(1-s_i) \ln\frac{s_i}{1-s_i} 
+\sum_{i\in I_2} \alpha^{- \lceil \log_M (w_i + 1) \rceil} (1-s_i) \ln\frac{s_i}{1-s_i} \\
\le & \frac{2^{-d_2} \cdot \hat{s}^{d_1} d_1}{\alpha \left( 1+2^{-d_2} \cdot \hat{s}^{d_1}\right)\ln(1+2^{-d_2} \cdot \hat{s}^{d_1})} \cdot (1-\hat{s})  \ln\left(\frac{\hat{s}}{1-\hat{s}}\right) 
+\sum_{i\in I_2} \alpha^{- \lceil \log_M (w_i + 1) \rceil} (1-s_i) \ln\frac{s_i}{1-s_i}.
\end{align*}

Finally, it remains to show that, for $i\in I_2$,
\begin{align}
	\label{eqn:gamma}
	 \alpha^{- \lceil \log_M (w_i + 1) \rceil} (1-s_i) \ln\frac{s_i}{1-s_i} \le \frac{1}{5}.
\end{align}

Recall that for $i \in I_2$, $w_i \ge M=45$,
and by Claim \ref{prop:si},
we have
\[
	\frac{1}{1+\left( \frac{16}{17} \right)^{w_i}} \le s_i < 1.
\]
Also note that $(1-s) \ln\frac{s}{1-s}$ is decreasing in $s$ for $\frac{4}{5} \le s < 1$.
Let $\gamma(w) = w \left(\frac{16}{17}\right)^{w} \ln \left( \frac{17}{16} \right) \alpha^{  -\log_M (w+1) - 1}$, we have
\[
	\alpha^{- \lceil \log_M (w_i + 1) \rceil} (1-s_i) \ln\frac{s_i}{1-s_i} \le \gamma(w_i).
\]
	It can be verified that $\gamma(w)$ is a decreasing function in $w$ for $w\ge 45$,
	and as a result we have $\gamma(w) \le \gamma(45) < \frac{1}{5}$.
	Hence the relation (\ref{eqn:gamma}) follows and we conclude the proof.
\end{proof}

Now it suffices to bound the decay rate with $\hat{\kappa}_d$.
As it is a real function in a single variable $\hat{s}$ with bounded domain $\left[\frac{1}{2}, 1\right]$, there is a standard calculus method to find their maximum values and we only need to verify that they satisfy Claim \ref{prop:kappa} and \ref{prop:kappa2}. We do that on a case-by-case basis.

\bigskip
{\noindent \bf Proof of Claim \ref{prop:kappa}.}
	Recall that $\hat{\kappa}_d$ is single-variate in $\hat{s}$ with $\frac{1}{2} \le \hat{s} \le 1$, parameterized by $d_1, d_2$.

Also note that $\hat{\kappa}_d$ is increasing in both $d_1, d_2$,
whereas $d_1 + d_2 = d \le 4$. So it suffices to check that $\hat{\kappa}_4 < 1$ for each case.

	{\noindent \bf Case } $d_2 = 0$:

	In this case,  we have 
	\[\hat{\kappa}_4=\frac{4 (1-\hat{s}) \hat{s}^4 }{\alpha \left(\hat{s}^4+1\right) \ln \left(\hat{s}^4+1\right)}\ln \left(\frac{\hat{s}}{1-\hat{s}}\right).\]
	It achieves a unique maximum at $\hat{s}^*\approx 0.758669$,
	and thus $\hat{\kappa}_4(\hat{s}^*) < 1$.
	
	\bigskip
	{\noindent \bf Case } $d_2 = 1$:

	In this case,  we have 	
	\[\hat{\kappa}_4 = \frac{3 (1-\hat{s}) \hat{s}^3 }{\alpha \left(\hat{s}^3+2\right) \ln \left(\frac{1}{2} \left(\hat{s}^3+2\right)\right)}\ln \left(\frac{\hat{s}}{1-\hat{s}}\right) + \frac{1}{5}.\]
	It achieves a unique maximum at $\hat{s}^*\approx 0.7691$,
	and thus $\hat{\kappa}_4 < \frac{4}{5} + \frac{1}{5} \le 1$.

	\bigskip
	{\noindent \bf Case } $d_2 = 2$:

	In this case, we have 
	\[\hat{\kappa}_4 = \frac{2 (1-\hat{s}) \hat{s}^2 }{\alpha \left(\hat{s}^2+4\right) \ln \left(\frac{1}{4} \left(\hat{s}^2+4\right)\right)}\ln \left(\frac{\hat{s}}{1-\hat{s}}\right) + \frac{2}{5}.\]
	It achieves a unique maximum at $\hat{s}^*\approx 0.776043$,
	and thus $\hat{\kappa}_4 < 0.55 + \frac{2}{5} < 1$.

	\bigskip
	{\noindent\bf Case } $d_2 = 3$:

	In this case, we have 
	\[\hat{\kappa}_4=\frac{(1-\hat{s}) \hat{s} }{\alpha (\hat{s}+8) \ln \left(\frac{\hat{s}+8}{8}\right)}\ln \left(\frac{\hat{s}}{1-\hat{s}}\right) + \frac{3}{5}.\]
	It achieves a unique maximum at $\hat{s}^*\approx 0.780104$,
	and thus $\hat{\kappa}_4 < 0.3 + \frac{3}{5} < 1$.

	\bigskip
	{\noindent\bf Case } $d_2 = 4$:

	In this case, we have $\hat{\kappa}_4 \le \frac{4}{5} < 1$.
	\qed

	\bigskip
{\noindent \bf Proof of Claim \ref{prop:kappa2}.}
	Let $f(\hat{s}) = (1-\hat{s}) \ln \left(\frac{\hat{s}}{1-\hat{s}}\right)$.
	For $\frac{1}{2} \le \hat{s} \le 1$, $f$ achieves its unique maximum at $\hat{s}^* \approx 0.782188$, and $f(\hat{s}^*) < 0.3$.
	Hence $\hat{\kappa}_5 < \frac{5f(\hat{s}^*)}{\alpha} + 1 < 3$.
	\qed

\bibliographystyle{plain}
\bibliography{refs}



\end{document}